\documentclass[journal,10pt]{IEEEtran}

\usepackage{amsmath,amssymb,amsfonts}
\usepackage{bm} 
\usepackage{amsthm}
\usepackage{bbm, dsfont}
\usepackage{mathtools}
\usepackage{cite} 
\usepackage{graphics} 
\usepackage{float}
\usepackage{booktabs}
\usepackage{epstopdf}
\usepackage{url} 
\usepackage{paralist}
\usepackage{lipsum}
\usepackage{mathtools}
\usepackage{breqn}
\usepackage{cuted}
\usepackage{graphicx}
\usepackage[caption=false,font=footnotesize]{subfig}
\usepackage{optidef}
\usepackage{algorithm,algorithmic}
\usepackage{xcolor}
\usepackage[framemethod=TikZ]{mdframed}
\usepackage{balance}
\usepackage{comment}
\usepackage{cleveref}

\allowdisplaybreaks 

\newtheorem{lem}{Lemma}

\newcounter{tempEquationCounter}
\newcounter{thisEquationNumber}
\newenvironment{floatEq}
{\setcounter{thisEquationNumber}{\value{equation}}\addtocounter{equation}{1}
\begin{figure*}[!t]
\normalsize\setcounter{tempEquationCounter}{\value{equation}}
\setcounter{equation}{\value{thisEquationNumber}}
}
{\setcounter{equation}{\value{tempEquationCounter}}
\hrulefill\vspace*{4pt}
\end{figure*}}

\begin{document}
%
\title{Fluid-Antenna-Enabled Integrated Bistatic Sensing and Backscatter Communication Systems}

\author{A. Abdelaziz Salem, 
        Saeed Abdallah*,~\IEEEmembership{Member,~IEEE},
        Khawla Alnajjar,~\IEEEmembership{Member,~IEEE}, 
        Mahmoud A. Albreem,~\IEEEmembership{Senior Member,~IEEE},
        Mohamed Saad,~\IEEEmembership{Senior Member,~IEEE}, 
        Hayssam Dahrouj, ~\IEEEmembership{Senior Member,~IEEE}
        Hesham Elsawy, ~\IEEEmembership{Senior Member,~IEEE}
        
\thanks{A. Abdelaziz Salem is with the Smart Automation and Communication Technologies (SACT) Research Center,  Research Institute of Sciences and Engineering (RISE), University of Sharjah, PO Box 26666, Sharjah, UAE. He is also with the Department of Electronics and Electrical Communications Engineering, Faculty of Electronic Engineering, Menoufia University, Menouf 32952, Menoufia, Egypt, e-mail: \texttt{ahmed.abdalaziz40@el-eng.menofia.edu.eg}.}
\thanks{\noindent Saeed Abdallah, Khawla Alnajjar and Mahmoud A. Albreem are with the Department of Electrical Engineering, University of Sharjah, Sharjah, UAE, e-mails: \texttt{\{sabdallah,kalnajjar,malbreem\}@sharjah.ac.ae}.}
\thanks{\noindent Mohamed Saad and Hayssam Dahrouj are with the Department of Computer Engineering, University of Sharjah, Sharjah, UAE, e-mails: \texttt{msaad@sharjah.ac.ae,hayssam.dahrouj@gmail.com}.}
\thanks{\noindent Hesham Elsawy is with the School of Computing, Queen's University, Kingston, Ontario, Canada, e-mail: \texttt{hesham.elsawy@queensu.ca}.}
\thanks{$^*$ Corresponding author.}
\thanks{This research was supported in part by the Smart Automation and Communication Technologies (SACT) Research Center,
Research Institute for Science and Engineering (RISE), University of Sharjah, Operating Grant no. 150410 and Competitive
Grant no. 25020403357.}
}

{}

\maketitle

\begin{abstract}
This paper studies a fluid-antenna-enabled integrated bistatic sensing and backscatter communication system for future networks where connectivity, power delivery, and environmental awareness are jointly supported by the same infrastructure. A multi-antenna base station (BS) with transmitting fluid antennas serves downlink users, energizes passive tags, and illuminates radar targets, while a spatially separated multi-antenna reader decodes tag backscatter and processes radar echoes to avoid the strong self-interference that would otherwise obscure weak returns at the BS. The coexistence of tags and targets, however, induces severe near--far disparities and multi-signal interference, which can be mitigated by fluid antennas through additional spatial degrees of freedom that reshape the multi-hop channels. We formulate a transmit-power minimization problem that jointly optimizes the BS information beamformers, sensing covariance matrix, reader receive beamformers, tag reflection coefficients, and fluid-antenna (FA) positions under heterogeneous quality of service constraints for communication, backscatter, and sensing, as well as energy-harvesting and FA geometry requirements. To tackle the resulting non-convex problem, we develop an alternating-optimization block-coordinate framework that solves four tractable subproblems using semidefinite relaxation, majorization--minimization, and successive convex approximation. Numerical results show consistent transmit-power savings over fixed-position antennas and zero-forcing baselines, achieving about 13.7\% and 54.5\% reductions, respectively.
\end{abstract}

\begin{IEEEkeywords}
Integrated sensing and backscatter communication, fluid antenna, multiple-input multiple-output, bistatic sensing, alternating optimization.   

\end{IEEEkeywords}

\IEEEpeerreviewmaketitle

\section{Introduction}\label{sec1}
Integrated sensing and communication (ISAC) is emerging as a key enabler for spectrum- and hardware-efficient networks, where a single downlink transmission simultaneously supports multi-user communications and environmental sensing. In parallel, backscatter communication (BComs) has attracted significant attention as a promising technology for ultra-low-power Internet-of-Things (IoT) devices, where passive tags modulate and reflect incident carrier signals \cite{rezaei2023coding}. Combining ISAC and BComs in a unified framework, referred to as integrated sensing and backscatter communication (ISABC), promises dense connectivity and high-resolution sensing at low cost and energy expenditure \cite{galappaththige2023integrated}, making it well-suited to vehicular networks, smart warehousing, and industrial automation.

From a conceptual standpoint, ISABC can be viewed as a device-centric ISAC variant with active sensing \cite{galappaththige2023integrated}. Since tags can carry embedded data while imprinting environment-dependent information onto their reflections, a reader can exploit tag-reflected signals for both communication and sensing, potentially improving spectral and hardware efficiency. Nevertheless, the resulting composite received signal typically contains a mixture of user data, tag backscatter, and clutter reflections, which motivates advanced decoding architectures \cite{galappaththige2025optimization}.

Recently, fluid-antennas (FAs) have attracted considerable research attention due to their ability to dynamically adapt their position and orientation \cite{ning2025movable}. This physical reconfigurability enables FAs to reshape the effective propagation channel by adjusting the antenna’s spatial placement and pointing direction, thereby improving signal transmission and reception. FAs have been advocated as a means to address inherent limitations of conventional multiple-input multiple-output (MIMO) systems, where fixed antenna locations restrict the ability to exploit spatial channel variations within a given transmit/receive region \cite{zhu2023movable}. In ISABC, the base station (BS) simultaneously supports downlink transmission, tag energization/backscattering, and target illumination, and so the resulting channels become sensitive to geometry and subject to severe near--far effects and strong coupling between the communication and sensing links. Consequently, the  spatial reconfigurability of FAs becomes useful for ISABC by providing additional spatial degrees of freedom (DoFs) to shape the composite channels by repositioning the radiating element(s) within a prescribed aperture \cite{fang2025integrated}. 
Despite these advantages, FA-enabled ISABC deployments should address interference in dense IoT settings, stringent tag energy budgets, and hardware limitations. Accordingly, ISABC architectures should be designed with low-complexity processing in mind and should accommodate devices powered by highly-constrained sources or energy harvesting (EH) \cite{3gpp_TR38848_2023}.

Most existing ISABC studies focus on scenarios where backscatter tags act as the primary sensing objects, thereby overlooking the coexistence of energy-constrained tags and independent radar targets. This paper generalizes ISABC by modeling both within a unified bistatic architecture, where a dedicated reader collects tag backscatter and target echoes. Moreover, we leverage an FA-enabled BS to introduce additional spatial DoFs for beam steering and diversity, achieving a balance between tag decoding reliability, downlink communication QoS, and target tracking performance. By reconfiguring the transmit aperture toward favorable spatial locations, the proposed approach can strengthen inherently weak tag-reflected links, mitigate direct BS--reader interference, and reduce cross-interference between active-user signals and tag modulation while considering the EH constraint.

\subsection{Related works}
As discussed earlier, DoFs are fundamental to achieving both high-throughput communication and high-accuracy sensing in ISAC systems. Accordingly, a variety of advanced technologies have been integrated with ISAC to enrich the available spatial and waveform DoFs, including multiple-input multiple-output (MIMO) and massive MIMO architectures, antenna selection strategies, and reconfigurable intelligent surfaces (RIS) \cite{li2024framework, topal2024multi, liu2023joint, jiang2024exploiting}. For example, MIMO-based designs have been shown to provide sufficient spatial flexibility for joint beamforming in dual-functional radar--communication systems, particularly at mmWave frequencies \cite{li2024framework}. Massive MIMO further expands the spatial DoFs and, when combined with mechanisms such as beam tilting and beam splitting, can improve the sensing--communication tradeoff and enable high-rate operation in mmWave/terahertz bands \cite{topal2024multi}. In addition, antenna selection strategies, often jointly optimized with transmit beamforming, have been investigated to balance multi-user communication requirements against radar sensing objectives under practical hardware and resource constraints \cite{liu2023joint}. Furthermore, RIS-assisted ISAC architectures have been proposed for cluttered or obstructed propagation environments, where controllable reflections can reshape channel conditions and provide additional DoFs to enhance both sensing and communication performance \cite{jiang2024exploiting}.

The integration of backscatter communication with ISAC has recently attracted growing interest, motivated by the potential for ultra-low-power connectivity and sensing. An early study has introduced an ISABC system and developed a power-allocation strategy to jointly enhance communication and sensing performance \cite{galappaththige2023integrated}, while another work has provided a broad overview of this integration, outlining key design principles and open challenges \cite{zargari2023sensing}. Protocol-level designs have also been explored, including backscatter-based ISAC frameworks with dedicated processing procedures and algorithms for joint localization and data transmission \cite{huang2022integrated}, as well as ambient backscatter-aided ISAC schemes with tailored estimation algorithms for V2X networks \cite{li2023ambc}. 

Moreover, rate-splitting multiple access has been exploited to mitigate interference of ISABC via optimizing transmit beamformers, user common rates, and tag reflection coefficients \cite{galappaththige2025optimization}. Further contributions have examined symbiotic architectures that unify localization and ambient backscatter, enabling concurrent communication and sensing functionalities \cite{ren2023toward}, and have exploited radar clutter as an effective carrier to support ambient backscatter links \cite{venturino2023radar}. In addition, advanced receiver and beamforming designs have been reported, such as two-dimensional direction-of-arrival sensing with multi-tag symbol detection \cite{tao2024integrated} and joint transmit beamforming strategies that minimize transmit power while satisfying both tag detection and communication requirements in ISABC systems \cite{luo2024isac}.

FA technology is also investigated as a practical means to reshape wireless channels, improve link reliability and capacity, and reduce energy consumption in challenging propagation environments. Existing studies establish physics- and field-based channel models for FAs by characterizing how the received response varies with antenna position using per-path amplitude/phase and angular information under far-field assumptions \cite{zhu2023modeling}. Building on such modeling, FA-enabled MIMO architectures demonstrate notable capacity gains over conventional fixed-position antenna (FPA) systems by jointly optimizing transmit and/or receive antenna locations \cite{ma2023mimo}. Beyond throughput improvements, FAs are further exploited to enhance physical-layer security by using position-induced channel variations to strengthen legitimate links and degrade eavesdropping channels \cite{tang2024secure}. In multi-user settings, FA-assisted multiple-access designs are formulated via power-minimization frameworks that jointly optimize antenna positions, user transmit powers, and BS combining to quantify FA gains under practical constraints \cite{zhu2023movable}. Related investigations in interference-limited MISO channels demonstrate that FA-provided spatial degrees of freedom can simultaneously boost the desired signal strength and suppress interference, yielding significant transmit-power reductions \cite{wang2024movable}. Moreover, \cite{salem2025movable} studies an FA-assisted covert ISAC framework with non-orthogonal multiple access, where a multi-antenna BS jointly serves public and covert user pairs through shared communication-and-sensing beams, using FAs to enhance covert communications without sacrificing service reliability.

Despite the above progress, existing works remain limited in capturing practical ISABC deployments. First, many ISABC formulations implicitly treat backscatter tags as substitutes for conventional radar targets. This assumption is crucial, as tags intentionally modulate extremely weak backscatter signals under stringent EH and operational constraints, while radar targets are uncontrolled objects with diverse radar cross-sections (RCS) and independent motion characteristics. The coexistence of these two classes of reflectors introduces practical challenges, including severe near–-far power disparities and the need to balance waveform and beamforming resources between reliable tag decoding and accurate target detection. Specifically, the near–-far disparity is particularly pronounced because the tag path is inherently double-hop \cite{galappaththige2025optimization}. The incident waveform must first propagate to the tag, and only a very small fraction of the received energy is then backscattered to the receiver. In addition, radar targets typically exhibit RCS that can be tens of decibels larger than the effective RCS of a passive tag \cite{borgese2020radar}. Consequently, returns from a nearby tag or a large object (e.g., a vehicle) can be tens, often exceeding 100 dB, stronger than those from a distant tag or a small target, leading to stringent receiver dynamic-range requirements and causing signal masking and mutual interference \cite{kumari2017ieee}.

Furthermore, a large portion of the literature adopts monostatic or co-located transceiver architectures, which can suffer from severe self-interference and dynamic coupling \cite{hakimi2022sum} when weak backscatter and radar echoes are received at the same node that transmits high-power waveforms \cite{jiao2025information}. In addition, although FA techniques have demonstrated substantial gains in communication and ISAC systems \cite{wong2020fluid, lyu2025movable}, their role in integrated bistatic ISABC systems and their interaction with joint tag decoding and target sensing objectives have not been thoroughly investigated. These gaps motivate a unified framework that explicitly models tag--target coexistence and leverages a bistatic reader together with a FA-enabled BS to provide additional spatial DoFs for balancing reliable tag decoding and accurate target sensing.

\subsection{Contribution and paper organization}
Motivated by smart-city, vehicular, warehousing, and industrial IoT deployments, a single wireless infrastructure is increasingly expected to (i) deliver downlink data to active users, (ii) energize and decode passive backscatter tags, and (iii) sense and track environmental targets. In such settings, tag-reflected links are typically orders of magnitude weaker than user links and can be overwhelmed by clutter and strong target echoes, while monostatic designs further face stringent isolation requirements due to simultaneous transmission and reception. To address these challenges, we consider a bistatic ISABC architecture in which a spatially separated reader collects weak reflections while mitigating self-interference. Moreover, by equipping the BS with a FA, additional reconfigurable aperture DoFs become available to strengthen tag backscatter, suppress dominant target components, and reduce multi-signal interference, thereby enabling a more effective tradeoff between communication service, tag reliability, and sensing accuracy. The main contributions are summarized as follows.
\begin{itemize}
    \item We propose an FA-enabled bistatic ISABC framework where an FA-equipped BS serves downlink users, energizes passive tags, and illuminates radar targets, while a separated reader jointly decodes tag backscatter and processes target echoes. Unlike conventional ISABC formulations, the model explicitly captures the coexistence of energy-constrained, intentionally modulated tags and uncontrolled radar targets, including near--far disparities and multi-signal interference.
    \item We formulate a total BS transmit-power minimization problem that jointly designs downlink precoders, sensing covariance, reader receive beamformers, tag reflection coefficients, and the FA position, subject to heterogeneous QoS constraints across communication, backscatter, bistatic sensing, and EH. 
    \item To solve the resulting nonconvex complex optimization problem, we  develop an alternating optimization (AO)-based block-coordinate algorithm that decomposes the main problem into tractable subproblems: reader and BS beamforming updates are handled via semi-definite relaxation (SDR), while FA positioning is addressed via majorization–minimization (MM) and successive convex approximation (SCA).
    \item Numerical results verify consistent transmit-power reductions over established baselines across a wide range of parameters. For instance, at $16$ FA antennas, the proposed design achieves about $13.7\%$ and $54.5\%$ power savings relative to fixed-position antennas (FPA) and zero-forcing (ZF), respectively.
\end{itemize}

The remainder of this paper is organized as follows. Section~\ref{sec2} describes the proposed system model. Section~\ref{sec3} presents the problem formulation and solution. Section~\ref{sec4} reports the simulation results and performance evaluation. Finally, Section~\ref{sec5} concludes the paper.

\textit{Notation:} Vectors are denoted by lowercase boldface letters, matrices by uppercase boldface letters, and scalars by regular letters. The notation $\mathbb{C}^{M \times N}$ denotes the space of $M \times N$ complex-valued matrices. We use $\mathrm{Tr}(\cdot)$ for the trace operator and  $\mathrm{diag}(\cdot)$ to form a diagonal matrix of a given vector. The symbols $(\cdot)^*$ and $(\cdot)^H$ denote the complex conjugate and the conjugate transpose, respectively, while $\Re\{\cdot\}$ and $\Im\{\cdot\}$ represent the real and imaginary parts. The $M \times M$ identity matrix is written as $\mathbf{I}_M$. Furthermore, $\mathbf{x} \sim \mathcal{CN}(\boldsymbol{\mu},\mathbf{R})$ denotes a circularly symmetric complex Gaussian random vector with mean $\boldsymbol{\mu}$ and covariance matrix $\mathbf{R}$.


\section{System model}\label{sec2}
\begin{figure}[t]
	\centering{\includegraphics[width=0.9\columnwidth]{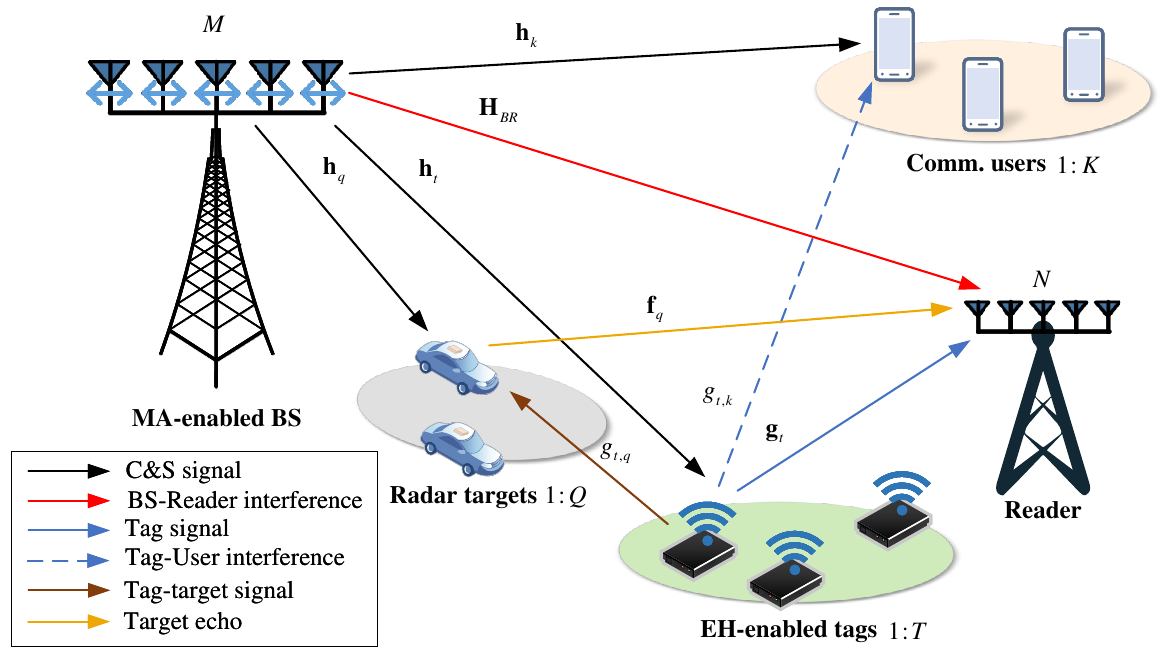}}
    \captionsetup{skip=3pt}
	\caption{FA-enabled bistatic ISABC system.\label{sys_model}}
\end{figure}
We consider an FA-enabled bistatic ISABC setup, where a BS is equipped with $M$ transmitting FAs. The BS serves $K$ single antenna users (${{U}_{k}}$ denotes the $k$-th user) and $T$ single antenna tags (${{A}_{t}}$ denotes the $t$-th tag), while a separate reader, with $N$ receiving antennas, is employed to capture the backscattered communications and sense $Q$ radar targets (${{T}_{q}}$ denotes the $q$-th target). 
Let $\mathcal{U}=\{1,\dots,K\}$, $\mathcal{M}=\{1,\dots,M\}$, $\mathcal{N}=\{1,\dots,N\}$,
$\mathcal{A}=\{1,\dots,T\}$, and $\mathcal{T}=\{1,\dots,Q\}$ denote the index sets of users, transmitting FAs, receiving antennas, tags, and targets, respectively. With a separate reader, we enable bistatic sensing and backscatter communication while avoiding  self-interference at the BS. An illustrative example of the considered model is shown in Fig.~\ref{sys_model}. The position vector of the FA of the BS is given by  $\mathbf{z}={{\left[ {{z}_{1}},...{{z}_{M}} \right]}^{T}}\in {{\mathbb{R}}^{M\times 1}}$, where ${{z}_{m}}$ denotes the position of $m$-th element. Without loss of generality, the FA repositioning overhead is considered negligible compared to the channel coherence time under the assumption of narrow-band quasi-static channels \cite{ma2023mimo}. 

Given the negligible FA displacement relative to the link distance, the far-field plane-wave model \cite{zhu2023movable} applies. Under this model, the angle of departure (AoD), the angle of arrival (AoA), and the amplitude coefficients remain constant, while only the path phase varies with FA position \cite{zhu2023modeling}. Accordingly, we adopt the field-response-based model of \cite{zhu2023modeling}, representing the channel as a superposition of multi-path signal coefficients. Moreover, we assume a block-fading and line-of-sight (LoS)-dominant Rician channel model as in \cite{you2020wireless} for all the communication links, where  ${{\mathbf{h}}_{k}}$ denotes BS-$U_k$ communication channels and  $\mathbf{h}_t,\mathbf{h}_q\in\mathbb{C}^{M\times 1}$ define the BS--$A_t$ and BS--$T_q$ channels, respectively. Moreover, $\mathbf{g}_t,\mathbf{f}_q\in\mathbb{C}^{N\times 1}$ are the $A_t$--reader and $T_q$--reader channels. The direct interference channel from the BS to the reader is ${{\mathbf{H}}_{BR}}\in {{\mathbb{C}}^{M\times N}}$. In addition, ${{g}_{t,k}}$, and ${{g}_{t,q}}$ represent the ${{A}_{t}}$-${{U}_{k}}$, and ${{A}_{t}}$-${{T}_{q}}$ channel links,  respectively.

    \subsection{Communication, BComs, and sensing models}
    The BS transmits a signal $\mathbf{x}=\sum\limits_{k\in \mathcal{U}}{{{\mathbf{w}}_{k}}{{{\bar{x}}}_{k}}}+\mathbf{s}$ to enable ISABC services, where $\left\{ {{{\bar{x}}}_{k}} \right\}$ and $\mathbf{s}$ are mutually independent. Specifically, ${{\mathbf{w}}_{k}}\in {{\mathbb{C}}^{M\times 1}}, \text{ } {{\bar{x}}_{k}}$, and $\mathbf{s}\in {{\mathbb{C}}^{M\times 1}}$ represent the precoding vector for ${{U}_{k}}$, the unit-power  symbol for ${{U}_{k}}$, and the sensing signal. Then, the covariance matrix of the transmitted signal is then given by ${{\mathbf{R}}_{x}}=\mathbb{E}\left\{ \mathbf{x}{{\mathbf{x}}^{H}} \right\}=\sum\limits_{k=1}^{K}{{{\mathbf{w}}_{k}}\mathbf{w}_{k}^{H}}+{{\mathbf{R}}_{s}}$, where ${{\mathbf{R}}_{s}}=\mathbb{E}\left\{ \mathbf{s}{{\mathbf{s}}^{H}} \right\}$. To enable BComs, each tag employs a $\bar{M}$-ary modulation scheme~\cite{rezaei2023coding}, where $\bar{M}$ denotes the modulation order. Specifically, the backscatter data symbol $d_t$ is drawn from a $\bar{M}$-ary constellation and is normalized to have unit average power, i.e., $\mathbb{E}\{|d_t|^2\}=1$. As a result, the users receive both the direct signal from the BS and the backscattered signals from the tags. Hence, the signal received at ${{U}_{k}}$ is given by
    \begin{equation} \label{RX_signal_UE}
        {{y}_{k}}=\mathbf{h}_{k}^{H}\mathbf{x}+\sum\limits_{t=1}^{T}{\sqrt{{{\beta }_{t}}}\mathbf{h}_{k,t}^{H}\mathbf{x}{{d}_{t}}}+{{n}_{k}},
    \end{equation}
    where ${{\mathbf{h}}_{k,t}}={{\mathbf{h}}_{t}}{{g}_{t,k}}$ represents the cascaded backscatter link (BS- ${{A}_{t}}$ -${{U}_{k}}$), ${{\beta }_{t}}\in \left( 0,1 \right)$ is the backscattering reflection coefficient, and ${{n}_{k}}\sim\mathcal{C}\mathcal{N}\left( 0,\sigma _{k}^{2} \right)$ represents the additive white Gaussian noise (AWGN) at ${{U}_{k}}$. Then, the received signal-to-interference-plus-noise ratio (SINR) at ${{U}_{k}}$ is represented by\footnote{At the user, target echoes are typically far below the direct downlink and multi-path components. This leads to extreme direct-to-target disparities that exceed practical ADC/front-end dynamic range \cite{berger2010signal}.} 
    \begin{equation} \label{SINR_UE}
        {{\gamma }_{k}}=\frac{\mathcal{S}_k}{\mathcal{I}_k},
    \end{equation}
    where $\mathcal{S}_k = {{\left| \mathbf{h}_{k}^{H}{{\mathbf{w}}_{k}} \right|}^{2}}$, 
    $\mathcal{I}_k = \sum\limits_{i=1,i\ne k}^{K}{{{\left| \mathbf{h}_{k}^{H}{{\mathbf{w}}_{i}} \right|}^{2}}}+\mathrm{Tr}\left( \bm{\overset{\scriptscriptstyle\frown}{\mathbf H}}_{k}{{\mathbf{R}}_{s}} \right)+\sum\limits_{t=1}^{T}{{{\beta }_{t}}\left( \sum\limits_{j=1}^{K}{{{\left| \mathbf{h}_{k,t}^{H}{{\mathbf{w}}_{j}} \right|}^{2}}+\mathrm{Tr}\left( \bm{\overset{\scriptscriptstyle\frown}{\mathbf H}}_{k,t}{{\mathbf{R}}_{s}} \right)} \right)}+\sigma _{k}^{2}$, with $\bm{\overset{\scriptscriptstyle\frown}{\mathbf H}}_{k}={{\mathbf{h}}_{k}}\mathbf{h}_{k}^{H}$ and $\bm{\overset{\scriptscriptstyle\frown}{\mathbf H}}_{k,t}={{\mathbf{h}}_{k,t}}\mathbf{h}_{k,t}^{H}$. 
    
    The tag’s backscattered signals and the echo signal from the radar targets propagate back to the reader. Consequently, the reader can extract environmental information from these reflections and decode the BComs signals. Then, the received signal at the reader, ${{\mathbf{y}}_{R}}\in {{\mathbb{C}}^{N\times 1}}$, is expressed as
    \begin{multline} \label{RX_reader}
        {{\mathbf{y}}_{R}}=\underbrace{\sum\limits_{t=1}^{T}{\sqrt{{{\beta }_{t}}}\mathbf{G}_{t}^{H}\mathbf{x}{{d}_{t}}}}_{\text{BComs signals}} + \underbrace{\sum\limits_{q=1}^{Q}{{{\mu }_{q}}\mathbf{F}_{q}^{H}\mathbf{x}}}_{\text{Target echos}} \\
        + \underbrace{\sum\limits_{q=1}^{Q}{\sum\limits_{t=1}^{T}{{{\mu }_{q}}\sqrt{{{\beta }_{t}}}\mathbf{F}_{t,q}^{H}\mathbf{x}{{d}_{t}}}}}_{\text{BComs-target echos}} 
        +  \underbrace{\mathbf{H}_{BR}^{H}\mathbf{x}}_{\text{BS-reader interference}} + {{\mathbf{n}}_{R}},
    \end{multline}
    where ${{\mathbf{G}}_{t}}={{\mathbf{h}}_{t}}\mathbf{g}_{t}^{H}\in {{\mathbb{C}}^{M\times N}}$, ${{\mathbf{F}}_{q}}={{\mathbf{h}}_{q}}\mathbf{f}_{q}^{H}\in {{\mathbb{C}}^{M\times N}}$, ${{\mathbf{F}}_{t,q}}=g_{t,q}^{*}{{\mathbf{h}}_{t}}\mathbf{f}_{q}^{H}\in {{\mathbb{C}}^{M\times N}}$ represent the equivalent BS-${{A}_{t}}$-reader link, BS-${{T}_{q}}$-reader link, and BS-${{A}_{t}}$-${{T}_{q}}$-reader link, respectively. Moreover, ${{\mathbf{n}}_{R}}\sim\mathcal{C}\mathcal{N}\left( 0,\sigma _{R}^{2}{{\mathbf{I}}_{N}} \right)$ denotes the AWGN at the reader and ${{\mu }_{q}}\sim\mathcal{C}\mathcal{N}\left( 0,{{\upsilon }^{2}} \right)$ denotes the radar cross section (RCS) of ${{T}_{q}}$. It is worth noting that the third term in \eqref{RX_reader} can be leveraged for tag-enabled cooperative sensing, where the tag-reflected signals are used to sense the $Q$ radar targets. To extract each tag’s information while simultaneously sensing the radar targets, we deploy a set of linear combiners $\{ \mathbf{u}_{t}, \mathbf{v}_{q}\} \subset \mathbb{C}^{N\times 1}$ to decode the $t$-th tag signal and the $q$-th target echo, respectively. Therefore, the post-processed SINR at the reader for decoding the tag signal (BComs mode) or retrieving the sensing information of target (sensing mode) is expressed as 
        \begin{equation}\label{post_processed_SINR_generic}
            \gamma_{R,l}^{\bar{o}} \triangleq
            \frac{\mathcal{S}_{R,l}^{\bar{o}}}{\mathcal{I}_{R,l}^{\bar{o}}},
            \qquad
            \begin{cases}
            \bar{o}=\mathrm{BComs},\; l=t\in\mathcal{A},\\
            \bar{o}=\mathrm{sens},\; l=q\in\mathcal{T},
            \end{cases}
        \end{equation}        
    where the corresponding intended and interference-plus-noise signals are provided in (\ref{BComs_and_Sensing_SINRs_terms}), at the top of next page.
    \begin{floatEq}
	       \begin{subequations}\label{BComs_and_Sensing_SINRs_terms}\begin{align} 
		        \mathcal{S}_{R,t}^{\mathrm{BComs}}=&{{\beta }_{t}}\left( \mathbf{u}_{t}^{H}\mathbf{G}_{t}^{H}{{\mathbf{R}}_{s}}{{\mathbf{G}}_{t}}{{\mathbf{u}}_{t}}+\sum\limits_{j=1}^{K}{{{\left| \mathbf{u}_{t}^{H}\mathbf{G}_{t}^{H}{{\mathbf{w}}_{j}} \right|}^{2}}} \right), \label{Bcoms_useful} \\ 
                \mathcal{I}_{R,t}^{\mathrm{BComs}}=&\sum\limits_{i=1,i\ne t}^{T}{{{\beta }_{i}}\left( \mathbf{u}_{t}^{H}\mathbf{G}_{i}^{H}{{\mathbf{R}}_{s}}{{\mathbf{G}}_{i}}{{\mathbf{u}}_{t}}+\sum\limits_{j=1}^{K}{{{\left| \mathbf{u}_{t}^{H}\mathbf{G}_{i}^{H}{{\mathbf{w}}_{j}} \right|}^{2}}} \right)}+{{\upsilon }^{2}}\sum\limits_{q=1}^{Q}{\left( \mathbf{u}_{t}^{H}\mathbf{F}_{q}^{H}{{\mathbf{R}}_{s}}{{\mathbf{F}}_{q}}{{\mathbf{u}}_{t}}+\sum\limits_{j=1}^{K}{{{\left| \mathbf{u}_{t}^{H}\mathbf{F}_{q}^{H}{{\mathbf{w}}_{j}} \right|}^{2}}} \right)} \nonumber \\ 
                &+{{\upsilon }^{2}}\sum\limits_{q=1}^{Q}{\sum\limits_{i=1}^{T}{{{\beta }_{i}}\left( \mathbf{u}_{t}^{H}\mathbf{F}_{i,q}^{H}{{\mathbf{R}}_{s}}{{\mathbf{F}}_{i,q}}{{\mathbf{u}}_{t}}+\sum\limits_{j=1}^{K}{{{\left| \mathbf{u}_{t}^{H}\mathbf{F}_{i,q}^{H}{{\mathbf{w}}_{j}} \right|}^{2}}} \right)}}+\mathbf{u}_{t}^{H}\mathbf{H}_{BR}^{H}{{\mathbf{R}}_{s}}{{\mathbf{H}}_{BR}}{{\mathbf{u}}_{t}}+\sum\limits_{j=1}^{K}{{{\left| \mathbf{u}_{t}^{H}\mathbf{H}_{BR}^{H}{{\mathbf{w}}_{j}} \right|}^{2}}}+\sigma _{R}^{2}{{\left\| {{\mathbf{u}}_{t}} \right\|}^{2}},    \label{Bcoms_interf} \\ 
                \mathcal{S}_{R,q}^{\text{sens}}=&{{\upsilon }^{2}}\left( \mathbf{v}_{q}^{H}\mathbf{F}_{q}^{H}{{\mathbf{R}}_{s}}{{\mathbf{F}}_{q}}{{\mathbf{v}}_{q}}+\sum\limits_{j=1}^{K}{{{\left| \mathbf{v}_{q}^{H}\mathbf{F}_{q}^{H}{{\mathbf{w}}_{j}} \right|}^{2}}}+\sum\limits_{t=1}^{T}{{{\beta }_{t}}\left( \mathbf{v}_{q}^{H}\mathbf{F}_{t,q}^{H}{{\mathbf{R}}_{s}}{{\mathbf{F}}_{t,q}}{{\mathbf{v}}_{q}}+\sum\limits_{j=1}^{K}{{{\left| \mathbf{v}_{q}^{H}\mathbf{F}_{t,q}^{H}{{\mathbf{w}}_{j}} \right|}^{2}}} \right)} \right), \label{sensing_useful} \\ 
                \mathcal{I}_{R,q}^{\text{sens}}=&\sum\limits_{t=1}^{T}{{{\beta }_{t}}\left( \mathbf{v}_{q}^{H}\mathbf{G}_{t}^{H}{{\mathbf{R}}_{s}}{{\mathbf{G}}_{t}}{{\mathbf{v}}_{q}}+\sum\limits_{u=1}^{K}{{{\left| \mathbf{v}_{q}^{H}\mathbf{G}_{t}^{H}{{\mathbf{w}}_{u}} \right|}^{2}}} \right)}+{{\upsilon }^{2}}\sum\limits_{j=1,j\ne q}^{Q}{\left( \mathbf{v}_{q}^{H}\mathbf{F}_{j}^{H}{{\mathbf{R}}_{s}}{{\mathbf{F}}_{j}}{{\mathbf{v}}_{q}}+\sum\limits_{i=1}^{K}{{{\left| \mathbf{v}_{q}^{H}\mathbf{F}_{j}^{H}{{\mathbf{w}}_{i}} \right|}^{2}}} \right)} \nonumber \\ 
                & +{{\upsilon }^{2}}\sum\limits_{j=1,j\ne q}^{Q}{\sum\limits_{t=1}^{T}{{{\beta }_{t}}\left( \mathbf{v}_{q}^{H}\mathbf{F}_{t,j}^{H}{{\mathbf{R}}_{s}}{{\mathbf{F}}_{t,j}}{{\mathbf{v}}_{q}}+\sum\limits_{i=1}^{K}{{{\left| \mathbf{v}_{q}^{H}\mathbf{F}_{t,j}^{H}{{\mathbf{w}}_{i}} \right|}^{2}}} \right)}}+\mathbf{v}_{q}^{H}\mathbf{H}_{BR}^{H}{{\mathbf{R}}_{s}}{{\mathbf{H}}_{BR}}{{\mathbf{v}}_{q}}+\sum\limits_{i=1}^{K}{{{\left| \mathbf{v}_{q}^{H}\mathbf{H}_{BR}^{H}{{\mathbf{w}}_{i}} \right|}^{2}}}+\sigma _{R}^{2}{{\left\| {{\mathbf{v}}_{q}} \right\|}^{2}}. \label{sensing_interf} 
	\end{align}\end{subequations}
    \end{floatEq}

    \subsection{Energy harvesting model}
    Since the tag is a passive component, it does not generate RF signals. Therefore, the tag rather employs EH to enable its essential functions. Specifically, ${{A}_{t}}$ employs a power splitter that divides the incident RF signal into two portions. Let the total received power at ${{A}_{t}}$ be $p_{t}^{in}=\sum\limits_{k=1}^{K}{{{\left| \mathbf{h}_{t}^{H}{{\mathbf{w}}_{k}} \right|}^{2}}}+\mathbf{h}_{t}^{H}{{\mathbf{R}}_{s}}{{\mathbf{h}}_{t}}$. Then, ${{A}_{t}}$ reflects a portion ${{\beta }_{t}}p_{t}^{in}$ for backscatter data transmission and harvests the remaining $\left( 1-{{\beta }_{t}} \right)p_{t}^{in}$ \cite{zhang2013mimo}. To capture real world effects such as circuit sensitivity and saturation, the non-linear model is employed as $p_{t}^{\text{EH}}=\Phi \left( \left( 1-{{\beta }_{t}} \right)p_{t}^{in} \right)$, where $\Phi \left( . \right)$ represents a nonlinear function \cite{boshkovska2015practical}. Following \cite{zargari2021max}, our design satisfies the EH requirement of $A_t$ by considering the following constraint. 
    \begin{equation} \label{EH_constraint}
        \left( 1-{{\beta }_{t}} \right)p_{t}^{in}\ge {{\Phi }^{-1}}\left( \rho  \right), ~ 0 \le {{\beta }_{t}} \le 1, ~ t\in \mathcal{A},
    \end{equation}
    where ${{\Phi }^{-1}}\left( \rho  \right) = \frac{b-\left( \rho+\frac{b}{c} \right)c}{\rho+\frac{b}{c}-a}$, where $\rho$ represents the required harvested (output/DC) power, which relies on the thresholds of saturation, sensitivity, and activation input power $a$, $b$, and $c$, respectively ($a >$0, $b>$0, and $c>$0) \cite{chen2017wireless}.

\section{Problem formulation and proposed solution} \label{sec3}
In this section, we aim to design the precoding vectors $\left\{ {{\mathbf{w}}_{k}} \right\}_{k\in \mathcal{U}}$, sensing covariance matrix ${{\mathbf{R}}_{s}}$, receive beamforming vectors $\left\{ {{\mathbf{u}}_{t}},{{\mathbf{v}}_{q}} \right\}_{t \in \mathcal{A},~ q \in \mathcal{T}}$, tag reflection coefficients $\left\{ {{\beta }_{t}} \right\}_{t \in \mathcal{A}}$, and FA  position vector $\mathbf{z}$ that minimize the total transmit power. In this context, the communication, BComs, and sensing SINRs must exceed their minimum requirements while satisfying the constraints of FA geometry and EH. Mathematically, the optimization problem can be formulated as follows:
\begin{subequations}\label{main_problem}
	\begin{align}
        \min_{{{\mathbf{w}}_{k}},{{\mathbf{R}}_{s}},{{\mathbf{u}}_{t}},{{\mathbf{v}}_{q}},\mathbf{z},{{\beta }_{t}}} \quad &
		\sum\limits_{k=1}^{K}{{{\left\| {{\mathbf{w}}_{k}} \right\|}^{2}}+\mathrm{Tr}\left( {{\mathbf{R}}_{s}} \right)}, \label{opt1_obj} \\
		\textrm{s.t.:} \quad 
        & {{\gamma }_{k}}\ge \Gamma _{k}^{\mathrm{th}}, ~ k\in \mathcal{U}, \label{opt1_c1}  \\
		& \gamma _{R,t}^{\mathrm{BComs}}\ge \Gamma _{R,t}^{\mathrm{BComs},\mathrm{th}}, ~ t\in \mathcal{A} \label{opt1_c2}, \\
  	    & \Upsilon _{R,q}^{\mathrm{sens}}\ge \Gamma _{R,q}^{\mathrm{sens},\mathrm{th}}, ~ q\in \mathcal{T}  
           \label{opt1_c3}, \\
        & {{z}_{1}}\ge 0,{{z}_{M}}\le D, \nonumber \\
        & {{z}_{m+1}}-{{z}_{m}}\ge \delta , ~ m\in \mathcal{M} \label{opt1_c4} \\
        & (\ref{EH_constraint}), \label{opt1_c5} 
	\end{align}
\end{subequations}
where $D$ and $\delta $ denote the maximum aperture of the FA array and the minimum inter-element spacing, respectively. The constraints (\ref{opt1_c1})--(\ref{opt1_c1}) denote the QoS requirements of communication, BComs, and sensing, respectively. The optimization problem in (\ref{main_problem}) is highly challenging. In addition to the non-convex communication, sensing and BComs fractional constraints, the optimization variables are jointly coupled both in the objective and in the constraints, making the solution space mathematically intractable. To tackle this, we adopt a block-coordinate approach that alternates between updating one variable block while fixing the others. Specifically, we first split the original problem (\ref{main_problem}) into four subproblems: (i) designing the transmit beamforming and the sensing covariance matrices at the BS, (ii) optimizing BComs reflection coefficients, (iii) designing the FA positions at the BS, and (iv) optimizing the receiving vectors. The details of this sequential update strategy are presented in the following subsections.

\subsection{Optimizing the transmit beamformers and the sensing covariance}\label{sec3-1}
In this section, we design the transmit beamformers ${{\left\{ {{\mathbf{w}}_{k}} \right\}}_{k\in \mathcal{U}}}$ and the covariance matrix ${{\mathbf{R}}_{s}}$ for given ${{\left\{ {{\mathbf{u}}_{t}} \right\}}_{t\in \mathcal{A}}},{{\left\{ {{\beta }_{t}} \right\}}_{t\in \mathcal{A}}},{{\left\{ {{\mathbf{v}}_{q}} \right\}}_{q\in \mathcal{T}}}$, and $\mathbf{z}$. Therefore, the optimization problem (\ref{main_problem}) becomes 
\begin{subequations}\label{BF_and_cov_optim_V1}
	\begin{align}
        \min_{{{\mathbf{w}}_{k}},{{\mathbf{R}}_{s}}} \quad &
		\sum\limits_{k=1}^{K}{{{\left\| {{\mathbf{w}}_{k}} \right\|}^{2}}+\mathrm{Tr}\left( {{\mathbf{R}}_{s}} \right)}, \label{BF_obj_V1} \\
		\textrm{s.t.:} \quad 
        & (\ref{EH_constraint}), (\ref{opt1_c1})-(\ref{opt1_c3}). \label{BF_c1_V1}  
	\end{align}
\end{subequations}
The optimization problem in (\ref{BF_and_cov_optim_V1}) is non-convex in ${{\left\{ {{\mathbf{w}}_{k}} \right\}}_{k\in \mathcal{U}}}$ and ${{\mathbf{R}}_{s}}$. To tackle this, problem in (\ref{BF_and_cov_optim_V1}) can be reformulated as a semi-definite programming (SDP) problem, where the optimization variables are defined as ${{\mathbf{W}}_{k}}={{\mathbf{w}}_{k}}\mathbf{w}_{k}^{H} \succeq \mathbf{0}, \mathbf{R}_s \succeq \mathbf{0}$  with $\mathrm{Rank}\left( {{\mathbf{W}}_{k}} \right)=1$. Hence, the communication QoS constraint (\ref{opt1_c1}) of the user $U_k$ can be expressed as 
\begin{equation}\label{comm_QoS_BF}
     \Gamma_k^{\mathrm{th}}\Big(\mathcal{I}_k+\mathcal{L}_k+\sigma_k^2\Big) \le \mathrm{Tr}\!\left(\bm{\overset{\scriptscriptstyle\frown}{\mathbf H}}_{k}\mathbf W_k\right),
\end{equation}
where the multi-user interference term and the BComs leakage term are defined as
$\mathcal{I}_k \triangleq 
\mathrm{Tr}\!\left(\bm{\overset{\scriptscriptstyle\frown}{\mathbf H}}_{k}\left(\mathbf{\bar W}_k+\mathbf R_s\right)\right)$, 
$\mathcal{L}_k \triangleq 
\mathrm{Tr}\!\left(\mathbf{\tilde H}_k \mathbf X_{\mathrm{TX}}\right)$, respectively, with $\mathbf{\tilde H}_k=\sum_{t=1}^{T}\beta_t \bm{\overset{\scriptscriptstyle\frown}{\mathbf H}}_{k,t}$,
$\mathbf X_{\mathrm{TX}}=\mathbf R_s+\mathbf W$, $\mathbf{\bar W}_k=\sum_{i=1,i\neq k}^{K}\mathbf W_i$,
and $\mathbf W=\sum_{j=1}^{K}\mathbf W_j$. 

Likewise, the BComs and sensing QoS constraints in (\ref{opt1_c2}) and (\ref{opt1_c3})
can be expressed in the following unified linear form.  
\begin{equation}\label{gen_QoS_BF}
\Gamma_{R,l}^{\bar{o},\mathrm{th}}\Big(\mathrm{Tr}\!\left(\mathbf C_{l}\mathbf X_{\mathrm{TX}}\right)
+\sigma_R^2\|\mathbf t_{l}\|^2\Big)
\le \xi_{l}\,\mathrm{Tr}\!\left(\bm{\overset{\scriptscriptstyle\frown}{\mathbf Z}}_{l}\mathbf X_{\mathrm{TX}}\right),
\end{equation}
where $\bar{o}\in \{\mathrm{BComs}, \mathrm{sens} \}$, and where 
\[
(\mathbf C_{l},\mathbf t_{l},\xi_{l},\bm{\overset{\scriptscriptstyle\frown}{\mathbf Z}}_{l})=
\begin{cases}
(\mathbf B_{t},\mathbf u_{t},\beta_{t},\bm{\overset{\scriptscriptstyle\frown}{\mathbf G}}_{t}), & l=t,\ \bar{o}=\mathrm{BComs},\\
(\mathbf A_{q},\mathbf v_{q},\upsilon^{2},\bm{\overset{\scriptscriptstyle\frown}{\mathbf F}}_{q}), & l=q,\ \bar{o}=\mathrm{sens}.
\end{cases}
\]
Moreover, the effective channels of BComs are $\bm{\overset{\scriptscriptstyle\frown}{\mathbf G}}_{t} = {{\mathbf{G}}_{t}}{{\mathbf{U}}_{t}}\mathbf{G}_{t}^{H}$, 
${{\mathbf{U}}_{t}}={{\mathbf{u}}_{t}}\mathbf{u}_{t}^{H}$, 
${{\mathbf{B}}_{t}}={{\mathbf{\bar{G}}}_{t}}+{{\upsilon }^{2}}\left( \bm{\overset{\scriptscriptstyle\frown}{\mathbf F}}_{t}+{{{\mathbf{\tilde{F}}}}_{t}} \right) + \bm{\overset{\scriptscriptstyle\frown}{\mathbf H}}_{BR,t}$, 
${{\mathbf{\bar{G}}}_{t}}=\sum\limits_{i=1,i\ne t}^{T}{{{\beta }_{i}}{{\mathbf{G}}_{i}}{{\mathbf{U}}_{t}}\mathbf{G}_{i}^{H}}$,
$\bm{\overset{\scriptscriptstyle\frown}{\mathbf F}}_{t}=\sum\limits_{q=1}^{Q}{{{\mathbf{F}}_{q}}{{\mathbf{U}}_{t}}\mathbf{F}_{q}^{H}}$,
${{\mathbf{\tilde{F}}}_{t}}=\sum\limits_{q=1}^{Q}{\sum\limits_{i=1}^{T}{{{\beta }_{i}}{{\mathbf{F}}_{i,q}}{{\mathbf{U}}_{t}}\mathbf{F}_{i,q}^{H}}}$, and 
$\bm{\overset{\scriptscriptstyle\frown}{\mathbf H}}_{BR,t}={{\mathbf{H}}_{BR}}{{\mathbf{U}}_{t}}\mathbf{H}_{BR}^{H}$. 
The effective sensing channels are
obtained from those in BComs mode by interchanging the roles of the
communication index $t$ with sensing index $q$, and by replacing
$\mathbf{U}_{t}$ with
$\mathbf{V}_{q}=\mathbf{v}_{q}\mathbf{v}_{q}^{H}$, together with the
corresponding substitutions
$\mathbf{B}_{t}\!\to\!\mathbf{A}_{q}$,
$\mathbf{\bar{G}}_{t}\!\to\!\mathbf{\bar{G}}_{q}$,
$\bm{\overset{\scriptscriptstyle\frown}{\mathbf F}}_{t},\tilde{\mathbf{F}}_{t}
\!\to\!
\mathbf{\bar{F}}_{q},\tilde{\mathbf{F}}_{q}$, and
$\bm{\overset{\scriptscriptstyle\frown}{\mathbf H}}_{BR,t}
\!\to\!
\bm{\overset{\scriptscriptstyle\frown}{\mathbf H}}_{BR,q}$.
Furthermore, the EH constraint (\ref{EH_constraint}) can be re-written as 
\begin{equation} \label{EH_QoS_BF}
    {{P}^{\mathrm{EH,th}}}\le \left( 1-{{\beta }_{t}} \right)\mathrm{Tr}\left( {{\mathbf{H}}_{t}}{{\mathbf{X}}_{\mathrm{TX}}} \right),
\end{equation}
where ${{\mathbf{H}}_{t}}={{\mathbf{h}}_{t}}\mathbf{h}_{t}^{H}$.

According to the analysis above, the optimization problem of BS transmit beamforming and covariance matrix can be reformulated as 
\begin{subequations}\label{BF_and_cov_optim_V2}
	\begin{align}
        \min_{{\mathbf{W}_{k}},{{\mathbf{R}}_{s}}} \quad &
		\mathrm{Tr}\left( \sum\limits_{k=1}^{K}  {{\mathbf{W}}_{k}} \right)+\mathrm{Tr}\left(  {{\mathbf{R}}_{s}} \right), \label{BF_obj_V2} \\
		\textrm{s.t.:} \quad 
        & (\ref{comm_QoS_BF}) - (\ref{EH_QoS_BF}), \quad \mathrm{Rank (\mathbf{W}_k)=1}. \label{BF_c1_V2}  
	\end{align}
\end{subequations}
Problem~(\ref{BF_and_cov_optim_V2}) can be tackled via SDR by dropping the rank-one constraint, which yields a convex SDP that can be efficiently solved using the CVX optimization toolbox \cite{grant2011cvx}. If the solution of  (\ref{BF_and_cov_optim_V2}) yields a rank-one matrix, the optimal transmit beamformer is directly recovered from its eigenvalue decomposition. Otherwise, Gaussian randomization \cite{wu2019intelligent} can be employed to recover rank-one transmit beamforming, while maintaining the QoS constraints.

\subsection{Optimizing the reflection coefficients for the tags}\label{sec3-2}
Given ${{\left\{ {{\mathbf{w}}_{k}} \right\}}_{k\in \mathcal{U}}}, {{\left\{ {{\mathbf{u}}_{t}} \right\}}_{t\in \mathcal{A}}}, {{\left\{ {{\mathbf{v}}_{q}} \right\}}_{q\in \mathcal{T}}}, {{\mathbf{R}}_{s}}$, and $\mathbf{z}$, the reflection coefficients design problem can be recast as the following feasibility search problem 
\begin{subequations}\label{Ref_coeff_optim_V1}
	\begin{align}
        \textrm{find} \quad & {{\left\{ {{\beta }_{t}} \right\}}_{t\in \mathcal{A}}}, \label{Ref_coeff_obj_V1} \\
		\textrm{s.t.:} \quad 
        & (\ref{EH_constraint}), (\ref{opt1_c1})-(\ref{opt1_c3}). \label{Ref_coeff_c1_V1}  
	\end{align}
\end{subequations}
The QoS communication constraint (\ref{opt1_c1}) can be expressed as 
\begin{equation} \label{comm_QoS_Ref}
    \sum\limits_{t=1}^{T}{{{\beta }_{t}}{{c}_{k,t}}}\le \frac{{{a}_{k}}}{\Gamma _{k}^{\mathrm{th}}}-{{b}_{k}},
\end{equation}
where ${{a}_{k}}=\mathrm{Tr}\left( \bm{\overset{\scriptscriptstyle\frown}{\mathbf H}}_{k}{{\mathbf{W}}_{k}} \right)$, ${{b}_{k}}=\sum\limits_{i=1,i\ne k}^{K}{\mathrm{Tr}\left( \bm{\overset{\scriptscriptstyle\frown}{\mathbf H}}_{k}{{\mathbf{W}}_{i}} \right)}+\mathrm{Tr}\left( \bm{\overset{\scriptscriptstyle\frown}{\mathbf H}}_{k}{{\mathbf{R}}_{s}} \right)+\sigma _{k}^{2}$, 
${{c}_{k,t}}=\mathrm{Tr}\left( \bm{\overset{\scriptscriptstyle\frown}{\mathbf H}}_{k,t}{{\mathbf{X}}_{\mathrm{TX}}} \right)$.

Moreover, the BComs and sensing QoS constraints (\ref{opt1_c2})--(\ref{opt1_c3}) can be re-written in unified form as
\begin{equation}\label{gen_QoS_Ref}
\Gamma_{R,l}^{\bar{o},\mathrm{th}}
\Big(\varepsilon_{0,l}+\mathcal{I}_{l}(\boldsymbol{\beta})+\upsilon^{2}\mathcal{C}_{l}(\boldsymbol{\beta})\Big)
\;\le\;
\eta_{l}\,\mathcal{S}_{l}(\boldsymbol{\beta}),
\end{equation}
where 
\[
(\eta_{l},\mathcal{S}_{l})=
\begin{cases}
(\beta_t,\;{\overset{\scriptscriptstyle\frown}{g}}_{t}), & l=t,\ \bar{o}=\mathrm{BComs},\\[1mm]
(\upsilon^{2},\;\sum_{t=1}^{T}\beta_t{\overset{\scriptscriptstyle\frown}{f}}_{t,q}), & l=q,\ \bar{o}=\mathrm{sens}, 
\end{cases}
\]
and the interference/cross-coupling terms are
\[
(\mathcal{I}_{l},\mathcal{C}_{l})=
\begin{cases}
\left(\sum\limits_{i=1,i\neq t}^{T}\beta_i{\overset{\scriptscriptstyle\frown}{g}}_{i,t},\;
\sum\limits_{q=1}^{Q}\sum\limits_{i=1}^{T}\beta_i{\overset{\scriptscriptstyle\frown}{f}}_{i,q,t}\right),
& l=t,\\[3mm]
\left(\sum\limits_{t=1}^{T}\beta_t\tilde{g}_{t,q},\;
\sum\limits_{j=1,j\neq q}^{Q}\sum\limits_{t=1}^{T}\beta_t{\overset{\scriptscriptstyle\frown}{f}}_{t,j,q}\right).
& l=q.
\end{cases}
\]
For $\bar{o} =$BComs, we have 
\begin{align*}
\varepsilon_{0,t}
&= \upsilon^{2}\sum_{q=1}^{Q}\Phi(\mathbf{u}_{t},\mathbf{F}_{q})
+\Phi(\mathbf{u}_{t},\mathbf{H}_{BR})
+\sigma_{R}^{2}\|\mathbf{u}_{t}\|^{2}, \\
\overset{\scriptscriptstyle\frown}{g}_{i,t}
&=\Phi(\mathbf{u}_{t},\mathbf{G}_{i}), \qquad
\overset{\scriptscriptstyle\frown}{g}_{t}=\Phi(\mathbf{u}_{t},\mathbf{G}_{t}),  
\end{align*}
where $\Phi(\mathbf{a},\mathbf{M})
\triangleq \mathbf{a}^{H}\mathbf{M}^{H}\mathbf{R}_{s}\mathbf{M}\mathbf{a}
+\sum_{j=1}^{K}\big|\mathbf{a}^{H}\mathbf{M}^{H}\mathbf{w}_{j}\big|^{2}$.
Moreover, $\overset{\scriptscriptstyle\frown}{f}_{i,q,t}$ is obtained from
$\overset{\scriptscriptstyle\frown}{g}_{i,t}$ by the substitution
$\mathbf{G}_{i}\!\rightarrow\!\mathbf{F}_{i,q}$, with $\mathbf{u}_{t}$, $\mathbf{w}_{j}$, and $\mathbf{R}_{s}$ unchanged. Similarly, for $\bar{o} =$sens, we have 
\begin{align*}
\varepsilon_{0,q}
&= \upsilon^{2}\!\!\sum_{\substack{j=1,~j\neq q}}^{Q}\!\Phi(\mathbf{v}_{q},\mathbf{F}_{j})
+\Phi(\mathbf{v}_{q},\mathbf{H}_{BR})
+\sigma_{R}^{2}\|\mathbf{v}_{q}\|^{2} \\
&-\frac{\upsilon^{2}}{\Gamma_{R,q}^{\mathrm{sens},\mathrm{th}}}\,\Phi(\mathbf{v}_{q},\mathbf{F}_{q}), \qquad 
\tilde{g}_{t,q} =\Phi(\mathbf{v}_{q},\mathbf{G}_{t}), 
\end{align*}
$\overset{\scriptscriptstyle\frown}{f}_{t,j,q}$ follows from $\tilde{g}_{t,q}$ by replacing
$\mathbf{G}_{t}$ with $\mathbf{F}_{t,j}$ (with $\mathbf{v}_{q}$, $\mathbf{w}_{j}$, and $\mathbf{R}_{s}$ unchanged), $\overset{\scriptscriptstyle\frown}{f}_{t,q}$ is the special case obtained by setting $j=q$.

In addition, the EH constraint (\ref{EH_constraint}) can be re-expressed as 
\begin{equation} \label{EH_QoS_Ref}
    {{P}^{\mathrm{EH,th}}}  \le \left( 1-{{\beta }_{t}} \right){{\overset{\scriptscriptstyle\frown}{h}}_{t}}, 
\end{equation}
where $\overset{\scriptscriptstyle\frown}{h}_{t}=\mathrm{Tr}\!\big(\mathbf{H}_{t}\mathbf{X}_{\mathrm{TX}}\big)$.

Accordingly, problem (\ref{Ref_coeff_optim_V1}) can be reformulated as 
\begin{subequations}\label{Ref_coeff_optim_V2}
	\begin{align}
        \textrm{find} \quad & {{\left\{ {{\beta }_{t}} \right\}}_{t\in \mathcal{A}}}, \label{Ref_coeff_obj_V2} \\
		\textrm{s.t.:} \quad 
        & (\ref{comm_QoS_Ref}) - (\ref{EH_QoS_Ref}), \label{Ref_coeff_c1_V2}  \\
        & 0 \le {{\beta }_{t}} \le 1, \forall t\in \mathcal{A}, \label{Ref_coeff_c2_V2} 
	\end{align}
\end{subequations}
which reduces to a linear-constraint feasibility problem, which can be efficiently handled using standard convex optimization solvers, e.g., CVX~\cite{grant2011cvx}.

\subsection{Optimizing the receive beamformers at the reader}\label{sec3-3}
In this section, we design the receive beamforming vectors for BComs, i.e., ${{\left\{ {{\mathbf{u}}_{t}} \right\}}_{t\in \mathcal{A}}}$, and sensing, i.e., ${{\left\{ {{\mathbf{v}}_{q}} \right\}}_{q\in \mathcal{T}}}$, at the reader, while keeping ${{\left\{ {{\mathbf{w}}_{k}} \right\}}_{k\in \mathcal{U}}}$, ${{\left\{ {{\beta }_{t}} \right\}}_{t\in \mathcal{A}}}$, ${{\mathbf{R}}_{s}}$, and $\mathbf{z}$ constant. For each receive beamformer, the problem (\ref{main_problem}) turns into a feasibility problem. Mathematically, such a feasibility problem can be written as: 
\begin{subequations}\label{RX_BF_optim_V1}
	\begin{align}
        \textrm{find} \quad & {{\left\{ {{\mathbf{U}}_{t}} \right\}}_{t\in \mathcal{A}}},{{\left\{ {{\mathbf{V}}_{q}} \right\}}_{q\in \mathcal{T}}}, \label{RX_BF_obj_V1} \\
		\textrm{s.t.:} \quad 
        & \mathrm{Tr}\left( \left( {{{\mathbf{\tilde{G}}}}_{t}}-\Gamma _{R,t}^{\mathrm{BComs,th}}{{{\mathbf{\tilde{F}}}}_{t}} \right){{\mathbf{U}}_{t}} \right)\ge 0,t\in \mathcal{A}, \label{RX_BF_c1_V1}  \\
        & \mathrm{Tr}\left( \left( {{{\mathbf{\tilde{D}}}}_{q}}-\Gamma _{R,t}^{\mathrm{sens,th}}{{{\mathbf{\tilde{C}}}}_{q}} \right){{\mathbf{V}}_{q}} \right)\ge 0,q\in \mathcal{T}, \label{RX_BF_c2_V1} \\
        & \mathrm{Tr}\left( {{\mathbf{U}}_{t}} \right)=1,\quad \mathrm{Tr}\left( {{\mathbf{V}}_{q}} \right)=1, \label{RX_BF_c3_V1}
	\end{align}
\end{subequations}
where the constraints (\ref{RX_BF_c1_V1})--(\ref{RX_BF_c2_V1}) enforce the QoS requirements for BComs and sensing, respectively, whereas (\ref{RX_BF_c3_V1}) imposes the unit-norm normalization on the receive beamformers. Moreover,
${{\mathbf{\tilde{D}}}_{q}}={{\upsilon }^{2}}\left( \mathbf{F}_{q}^{H}{{\mathbf{R}}_{x}}{{\mathbf{F}}_{q}}+\sum\limits_{t=1}^{T}{{{\beta }_{t}}\mathbf{F}_{t,q}^{H}{{\mathbf{R}}_{x}}{{\mathbf{F}}_{t,q}}} \right)$,
${{\mathbf{\tilde{G}}}_{t}}={{\beta }_{t}}\mathbf{G}_{t}^{H}{{\mathbf{R}}_{x}}{{\mathbf{G}}_{t}}$,  
${{\mathbf{\tilde{C}}}_{q}}=\sum\limits_{t=1}^{T}{{{\beta }_{t}}\mathbf{G}_{t}^{H}{{\mathbf{R}}_{x}}{{\mathbf{G}}_{t}}}+{{\upsilon }^{2}}\sum\limits_{r=1,r\ne q}^{Q}{\mathbf{F}_{r}^{H}{{\mathbf{R}}_{x}}{{\mathbf{F}}_{r}}}+{{\upsilon }^{2}}\sum\limits_{r=1,r\ne q}^{Q}{\sum\limits_{t=1}^{T}{{{\beta }_{t}}\mathbf{F}_{t,r}^{H}{{\mathbf{R}}_{x}}{{\mathbf{F}}_{t,r}}}}+\mathbf{H}_{BR}^{H}{{\mathbf{R}}_{x}}{{\mathbf{H}}_{BR}}+\sigma _{R}^{2}{{\mathbf{I}}_{N}}$, and  
${{\mathbf{\tilde{F}}}_{t}}=\sum\limits_{i=1,i\ne t}^{T}{{{\beta }_{i}}\mathbf{G}_{i}^{H}{{\mathbf{R}}_{x}}{{\mathbf{G}}_{i}}}+{{\upsilon }^{2}}\sum\limits_{q=1}^{Q}{\mathbf{F}_{q}^{H}{{\mathbf{R}}_{x}}{{\mathbf{F}}_{q}}}+{{\upsilon }^{2}}\sum\limits_{q=1}^{Q}{\sum\limits_{j=1}^{T}{{{\beta }_{j}}\mathbf{F}_{j,q}^{H}{{\mathbf{R}}_{x}}{{\mathbf{F}}_{j,q}}}}+\mathbf{H}_{BR}^{H}{{\mathbf{R}}_{x}}{{\mathbf{H}}_{BR}}+\sigma _{R}^{2}{{\mathbf{I}}_{N}}$.
To render \eqref{RX_BF_optim_V1} convex, we relax the rank-one constraints
$\mathrm{rank}(\mathbf{U}_t)=1$ and $\mathrm{rank}(\mathbf{V}_q)=1$ and solve the resulting SDR.
If the optimal solutions $\mathbf{U}_t^{\star}$ and/or $\mathbf{V}_q^{\star}$ are not rank-one, feasible rank-one receive beamformers are obtained via Gaussian randomization.
The unit-trace constraints in \eqref{RX_BF_c3_V1} are enforced by the normalizations
$\mathbf{U}_t \leftarrow \mathbf{U}_t/\mathrm{Tr}(\mathbf{U}_t)$ and
$\mathbf{V}_q \leftarrow \mathbf{V}_q/\mathrm{Tr}(\mathbf{V}_q)$.

\subsection{Optimizing the transmit antenna position at the BS}\label{sec3-4}
With the optimized values of the variables ${{\left\{ {{\mathbf{w}}_{k}} \right\}}_{k\in \mathcal{U}}},{{\mathbf{R}}_{s}}$, ${{\left\{ {{\mathbf{u}}_{t}} \right\}}_{t\in \mathcal{A}}}$, ${{\left\{ {{\beta }_{t}} \right\}}_{t\in \mathcal{A}}}$, ${{\left\{ {{\mathbf{v}}_{q}} \right\}}_{q\in \mathcal{T}}}$, the antenna position vector $\mathbf{z}$ can be optimized solving the following problem.
\begin{subequations}\label{FA_optim_V1}
	\begin{align}
        \textrm{find} \quad & \mathbf{z}, \label{FA_obj_V1} \\
		\textrm{s.t.:} \quad 
        & (\ref{opt1_c1}) - (\ref{opt1_c5}), \label{FA_c1_V1}   
	\end{align}
\end{subequations}
which is a non-convex optimization problem. Such a non-convexity stems from the fact that the fractional forms (\ref{opt1_c1}) - (\ref{opt1_c3}), and (\ref{opt1_c5}) are functions of the array response vectors, which explicitly depend on $\mathbf{z}$ through complex exponentials. So the expressions of (\ref{opt1_c1})--(\ref{opt1_c3}) and (\ref{opt1_c5}) become the squared magnitude of a sum of complex exponentials, yielding an oscillatory function, containing cosine terms, which is neither convex nor concave in $\mathbf{z}$.

The communication QoS constraint in (\ref{opt1_c1}) can be expressed as 
\begin{equation} \label{comm_QoS_FA}
    {{\mathcal{S}}_{k}}\ge \Gamma _{k}^{\mathrm{th}}{{\mathcal{I}}_{k}}.
\end{equation}
Note that the terms in $\mathcal{S}_k$ and $\mathcal{I}_k$ are coupled with the matrix $\mathbf{R}_s$, which appears through quadratic forms of the type $\mathbf{x}^{H}\mathbf{R}_s\mathbf{x}$.
To expose a tractable structure, we exploit that $\mathbf{R}_s\succeq \mathbf{0}$ is Hermitian and apply the eigenvalue decomposition
$\mathbf{R}_s=\sum_{r=1}^{M}\lambda_r \mathbf{e}_r\mathbf{e}_r^{H}$, which yields
$\mathbf{x}^{H}\mathbf{R}_s\mathbf{x}
=\sum_{r=1}^{M}\lambda_r \left|\mathbf{e}_r^{H}\mathbf{x}\right|^{2}$.
Accordingly, the sensing-related term can be interpreted as a weighted sum of squared-magnitude effective channels (e.g., $\left|\mathbf{e}_r^{H}\mathbf{h}_k(\mathbf{z})\right|^{2}$), which makes the dependence on the antenna position vector $\mathbf{z}$ explicit and enables expanding the constraints into quadratic expressions in $\mathbf{z}$.
Specifically, the left hand side (LHS) is lower bounded using the Taylor expansions of  trigonometric functions as in the following Lemma.
\begin{lem} \label{lem1}
The intended communication signal ${{\mathcal{S}}_{k}}$ can be lower-bounded as a quadratic expression of $\mathbf{z}$ as follows. 
\begin{equation} \label{FA_comm_useful_LB}
    \mathcal{S}_{k}^{\text{LB}}\left( {{\mathbf{w}}_{k}},\mathbf{z} \right)\triangleq {{\left| \mathbf{w}_{k}^{H}{{\mathbf{h}}_{k}} \right|}^{2}}=\frac{1}{2}{{\mathbf{z}}^{T}}{{\mathbf{R}}_{k}}\mathbf{z}+\mathbf{r}_{k}^{T}\mathbf{z}+{{c}_{k}},
\end{equation}
\end{lem}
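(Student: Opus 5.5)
Since $\mathcal{S}_k=|\mathbf{w}_k^H\mathbf{h}_k(\mathbf{z})|^2$ depends on $\mathbf{z}$ only through phases of complex exponentials, I would build a quadratic minorizer around the current iterate $\mathbf{z}^{(i)}$ in the standard MM way for movable/fluid antennas. First, write the field-response channel of \cite{zhu2023modeling} explicitly: with $L_k$ paths, AoD $\theta_{k,\ell}$ and path gains $\sigma_{k,\ell}$, the $m$-th entry is $[\mathbf{h}_k(\mathbf{z})]_m=\sum_{\ell}\sigma_{k,\ell}e^{j\frac{2\pi}{\lambda}z_m\cos\theta_{k,\ell}}$. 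Expanding the squared magnitude then gives
\[
\mathcal{S}_k(\mathbf{z})=\sum_{m,n}\sum_{\ell,\ell'}|a_{m,n,\ell,\ell'}|\cos\!\Big(\tfrac{2\pi}{\lambda}\big(z_m\cos\theta_{k,\ell}-z_n\cos\theta_{k,\ell'}\big)+\angle a_{m,n,\ell,\ell'}\Big),
\]
where $a_{m,n,\ell,\ell'}=[\mathbf{w}_k]_m^*[\mathbf{w}_k]_n\sigma_{k,\ell}\sigma_{k,\ell'}^*$. So $\mathcal{S}_k$ is a finite sum of cosines of affine functions $\kappa(\mathbf{z})$ of $\mathbf{z}$, and it is smooth with explicitly computable gradient $\nabla\mathcal{S}_k(\mathbf{z}^{(i)})$ and Hessian $\nabla^2\mathcal{S}_k(\mathbf{z}^{(i)})$.

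Next, I would lower-bound each cosine with its second-order Taylor bound. For scalar $x$ around $x_0$, $\cos x\ge\cos x_0-\sin x_0(x-x_0)-\tfrac12(x-x_0)^2$, because the second derivative of $\cos$ is at least $-1$. Substituting $x=\kappa(\mathbf{z})$ gives a quadratic minorizer for each term. Summing these with weights $|a|\ge0$ yields
\[
\mathcal{S}_k(\mathbf{z})\ge\mathcal{S}_k(\mathbf{z}^{(i)})+\nabla\mathcal{S}_k(\mathbf{z}^{(i)})^T(\mathbf{z}-\mathbf{z}^{(i)})-\tfrac{\delta_k}{2}\|\mathbf{z}-\mathbf{z}^{(i)}\|^2 .
\]
Equivalently, I would use a scalar $\delta_k\ge\|\nabla^2\mathcal{S}_k\|_2$ uniformly in $\mathbf{z}$, so that $\nabla^2\mathcal{S}_k+\delta_k\mathbf{I}_M\succeq\mathbf{0}$. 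Collecting terms gives the stated form, with $\mathbf{R}_k=-\delta_k\mathbf{I}_M$ (or $\nabla^2\mathcal{S}_k(\mathbf{z}^{(i)})-\delta_k\mathbf{I}_M$ in the refined version) and $\mathbf{r}_k=\nabla\mathcal{S}_k(\mathbf{z}^{(i)})+\delta_k\mathbf{z}^{(i)}$. The constant is $c_k=\mathcal{S}_k(\mathbf{z}^{(i)})-\nabla\mathcal{S}_k(\mathbf{z}^{(i)})^T\mathbf{z}^{(i)}-\tfrac{\delta_k}{2}\|\mathbf{z}^{(i)}\|^2$. Note that the equality sign in \eqref{FA_comm_useful_LB} should be read as "$\mathcal{S}_k\ge$ quadratic", with equality at $\mathbf{z}=\mathbf{z}^{(i)}$.

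The main obstacle is producing a valid, explicit, not-too-loose curvature constant $\delta_k$. I would bound each second partial derivative by its absolute-value sum, $\big|\partial^2\mathcal{S}_k/\partial z_m\partial z_n\big|\le(\tfrac{2\pi}{\lambda})^2\sum|a|\,|\cos\theta\cos\theta'|$. I would then take $\delta_k$ as the Frobenius norm, or the maximum row sum, of that bound matrix, using $\|\cdot\|_2\le\|\cdot\|_F$. This makes $\mathbf{R}_k\preceq\mathbf{0}$, so the bound is concave in $\mathbf{z}$ and the constraint $\mathcal{S}_k^{\text{LB}}\ge\Gamma_k^{\mathrm{th}}\mathcal{I}_k$ can be handled via SCA. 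Finally, I would check tightness at $\mathbf{z}^{(i)}$ (equal value and gradient), which guarantees monotone convergence of the MM/SCA updates.
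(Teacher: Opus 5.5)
Your argument is correct, but you assemble the minorizer differently from the paper.

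\textbf{Comparison with the paper.} Both rest on the same scalar bound $\cos x\ge\cos x_0-\sin x_0\,(x-x_0)-\tfrac12(x-x_0)^2$, which is exactly \eqref{LB_sin_cos}.

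The paper uses its Rician model, in which only the LoS steering vector depends on $\mathbf z$ and $\tilde{\mathbf h}_k$ is fixed. It splits $\mathcal S_k$ into $|\bar\ell_k|^2 g_{1,k}^{\cos}+g_{2,k}+\mathrm{const}$ and keeps the term-by-term bounds. This yields the structured Hessian $\mathbf R_k=|\bar\ell_k|^2\mathbf R_{g1}^{\cos}+\mathbf R_{g2}$. The first part is $-2\hat\theta_k^2$ times a weighted graph Laplacian with weights $|[\mathbf w_k]_m||[\mathbf w_k]_{m'}|$.

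You instead use the multipath field-response model and compress all curvature into one scalar $\delta_k$, giving $\mathbf R_k=-\delta_k\mathbf I_M$. Your surrogate is looser, so MM progress is slower, but it is uniformly valid and concave by construction. The paper's surrogate is tighter and reflects the array geometry. Your argument transfers directly to the paper's model, since there too $\mathcal S_k$ is a constant plus a nonnegative combination of cosines of affine functions of $\mathbf z$.

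Writing every term as $|a|\cos(\kappa+\angle a)$ with a nonnegative weight is what makes any term-wise bound legitimate. On this point your route is sounder than the paper's derivation. In \eqref{g2k_LB} the lower bounds are applied to $\cos(\hat\theta_k z_m)$ and $\sin(\hat\theta_k z_m)$ multiplied by $\Re\{[\tilde{\mathbf w}_k]_m\}$ and $\Im\{[\tilde{\mathbf w}_k]_m\}$, which may be negative. So that piece is not guaranteed to be a minorizer, and $\mathbf R_{g2}$ (hence $\mathbf R_k$) need not be negative semidefinite. Rewriting $2\Re\{[\tilde{\mathbf w}_k]_m e^{j\hat\theta_k z_m}\}=2|[\tilde{\mathbf w}_k]_m|\cos(\hat\theta_k z_m+\angle[\tilde{\mathbf w}_k]_m)$, as you would, repairs it.

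Your reading of the ``$=$'' in the statement as ``$\ge$, with equality at $\mathbf z^{(i)}$'' is also the right one; the appendix itself only writes $\approx$.

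\textbf{Corrections.} Three small points need fixing.
\begin{itemize}
\item ``Equivalently'' should be ``consequently''. Summing the term-wise bounds gives Hessian $-\sum|a|\nabla\kappa\nabla\kappa^T$, which is $\succeq-\delta_k\mathbf I_M$ for your $\delta_k$. So the $\delta_k$-bound follows from the summed bound and is looser than it.
\item The entrywise Hessian bound should read $|\partial^2\mathcal S_k/\partial z_m\partial z_n|\le\sum|a|\,|\partial\kappa/\partial z_m|\,|\partial\kappa/\partial z_n|$. On the diagonal the factors are $(\tfrac{2\pi}{\lambda})^2\cos^2\theta_{k,\ell}$ and $(\tfrac{2\pi}{\lambda})^2(\cos\theta_{k,\ell}-\cos\theta_{k,\ell'})^2$, not $|\cos\theta_{k,\ell}\cos\theta_{k,\ell'}|$. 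The Frobenius and row-sum step is unaffected.
\item Drop the parenthetical ``refined'' choice $\mathbf R_k=\nabla^2\mathcal S_k(\mathbf z^{(i)})-\delta_k\mathbf I_M$. If $\delta_k$ only bounds $\sup_{\mathbf z}\|\nabla^2\mathcal S_k(\mathbf z)\|_2$, it is not a minorizer; you would need $\delta_k\ge\sup_{\mathbf y}\|\nabla^2\mathcal S_k(\mathbf y)-\nabla^2\mathcal S_k(\mathbf z^{(i)})\|_2$, e.g.\ $2\delta_k$. A counterexample: for $f=\cos$, $x_0=\pi-0.1$ and $\delta=1$, that quadratic is about $-0.47$ at $x=-\pi$, where $f=-1$.
\end{itemize}
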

\begin{proof}
	See Appendix \ref{appendix1}.
\end{proof} 

Following the analysis in Appendix \ref{appendix1}, the interference $\mathcal{I}_k$ can be denoted as     
\begin{align}
    \mathcal{I}_k
    &=  \sum\limits_{i=1,i\ne k}^{K}{{{\mathcal{S}}_{k}}\left( {{\mathbf{w}}_{i}},\mathbf{z} \right)}+\sum\limits_{r=1}^{M}{{{\lambda }_{r}}{{E}_{k,r}}\left( {{\mathbf{e}}_{r}},\mathbf{z} \right)} \notag \\
       & +\sum\limits_{t=1}^{T}{{{\beta }_{t}}\sum\limits_{r=1}^{M}{{{\lambda }_{r}}{{E}_{k,t,r}}\left( {{\mathbf{e}}_{r}},\mathbf{z} \right)}} \notag \\
       &+\sum\limits_{t=1}^{T}{{{\beta }_{t}}\sum\limits_{j=1}^{K}{{{E}_{j,k,t}}\left( {{\mathbf{w}}_{j}},\mathbf{z} \right)}}+\sigma _{k}^{2}, \label{FA_Interf_comm_V1}
\end{align}
where  
\small
\begin{align}
    & {{E}_{k,r}}={{\left| {{{\tilde{\ell }}}_{k}}\mathbf{e}_{r}^{H}{{{\mathbf{\tilde{h}}}}_{k}} \right|}^{2}}+{{\left| {{{\bar{\ell }}}_{k}} \right|}^{2}}\sum\limits_{m}^{M}{\sum\limits_{{{m}'}}^{M}{{{\nu }_{r,m,{m}'}}\cos \left( {{\varphi }_{r,k,m,{m}'}} \right)}} \notag \\ 
    & +2\sum\limits_{m=1}^{M}{\Re \left\{ {{\left[ {{{\mathbf{\tilde{e}}}}_{k,r}} \right]}_{m}} \right\}\cos \left( {{{\hat{\theta }}}_{k}}{{z}_{m}} \right)+\Im \left\{ {{\left[ {{{\mathbf{\tilde{e}}}}_{k,r}} \right]}_{m}} \right\}\sin \left( {{{\hat{\theta }}}_{k}}{{z}_{m}} \right)} \label{FA_interf_UE_1st_V1},
\end{align}
\normalsize
where $E_{k,t,r}$ can be obtained from $E_{k,r}$ by replacing
$\tilde{\ell}_{k}$ with $\tilde{\omega}_{k,t}$,
$\bar{\ell}_{k}$ with $\bar{\omega}_{k,t}$,
$\tilde{\mathbf{h}}_{k}$ with $\tilde{\mathbf{h}}_{t}$,
$\varphi_{r,k,m,m'}$ with $\varphi_{r,t,m,m'}$,
$\hat{\theta}_{k}$ with $\hat{\theta}_{t}$,
and $\tilde{\mathbf{e}}_{k,r}$ with $\tilde{\mathbf{e}}_{k,t,r}$.
Furthermore, $E_{j,k,t}$ follows from $E_{k,t,r}$ by replacing
$\nu_{r,m,m'}$ with $\varpi_{j,m,m'}$,
$\varphi_{r,t,m,m'}$ with $\varphi_{j,t,m,m'}$,
$\tilde{\mathbf{e}}_{k,t,r}$ with $\tilde{\mathbf{w}}_{j,k,t}$,
and $\mathbf{e}_{r}$ with $\mathbf{w}_{j}$. 
Moreover, we have ${{\nu }_{r,m,{m}'}}=\left| \left[ {{\mathbf{e}}_{r}} \right]_{m}^{*}{{\left[ {{\mathbf{e}}_{r}} \right]}_{{{m}'}}} \right|$, 
${{\varphi }_{r,k,m,{m}'}}={{\hat{\theta }}_{k}}{{z}_{m}}-{{\hat{\theta }}_{k}}{{z}_{{{m}'}}}-\Delta {{\psi }_{r,m,{m}'}}$ , 
$\Delta {{\psi }_{r,m,{m}'}}=\angle {{\left[ {{\mathbf{e}}_{r}} \right]}_{m}}-\angle {{\left[ {{\mathbf{e}}_{r}} \right]}_{{{m}'}}}$, 
${{\mathbf{\tilde{e}}}_{k,r}}={{\bar{\ell }}_{k}}\tilde{\ell }_{k}^{*}\mathbf{\tilde{h}}_{k}^{H}{{\mathbf{e}}_{r}}\mathbf{e}_{r}^{H}$, 
${{\bar{\omega }}_{k,t}}={{g}_{t,k}}\sqrt{\frac{{{\kappa }_{B,t}}}{1+{{\kappa }_{B,t}}}\ell {{\left( \frac{{{d}_{B,t}}}{{{d}_{0}}} \right)}^{-{{\alpha }_{B,t}}}}}$,  ${{\tilde{\omega }}_{k,t}}={{g}_{t,k}}\sqrt{\frac{\ell }{1+{{\kappa }_{B,t}}}{{\left( \frac{{{d}_{B,t}}}{{{d}_{0}}} \right)}^{-{{\alpha }_{B,t}}}}}$,
${{\hat{\theta }}_{t}}=\frac{-2\pi }{\lambda }\cos {{\theta }_{t}}$,
${{\varphi }_{r,t,m,{m}'}}={{\hat{\theta }}_{t}}{{z}_{m}}-{{\hat{\theta }}_{t}}{{z}_{{{m}'}}}-\Delta {{\psi }_{r,m,{m}'}}$, 
${{\mathbf{\tilde{e}}}_{k,t,r}}={{\bar{\omega }}_{k,t}}\tilde{\omega }_{k,t}^{*}\mathbf{\tilde{h}}_{t}^{H}{{\mathbf{e}}_{r}}\mathbf{e}_{r}^{H}$,
${{\varpi }_{j,m,{m}'}}=\left| \left[ {{\mathbf{w}}_{j}} \right]_{m}^{*}{{\left[ {{\mathbf{w}}_{j}} \right]}_{{{m}'}}} \right|$,
${{\varphi }_{j,t,m,{m}'}}={{\hat{\theta }}_{t}}{{z}_{m}}-{{\hat{\theta }}_{t}}{{z}_{{{m}'}}}-\Delta {{\psi }_{j,m,{m}'}}$,
$\Delta {{\psi }_{j,m,{m}'}}=\angle {{\left[ {{\mathbf{w}}_{j}} \right]}_{m}}-\angle {{\left[ {{\mathbf{w}}_{j}} \right]}_{{{m}'}}}$, and 
${{\mathbf{\tilde{w}}}_{j,k,t}}={{\bar{\omega }}_{k,t}}\tilde{\omega }_{k,t}^{*}\mathbf{\tilde{h}}_{t}^{H}{{\mathbf{W}}_{j}}$.

The expressions $E_{k,r}, E_{k,t,t}$ and $E_{j,k,t}$ involve sums of trigonometric functions of $\mathbf{z}$, resulting in non-convexity. Hence, the MM method is re-invoked to upper-bound the corresponding terms via the following trigonometric approximations
\begin{align} \label{UB_sin_cos}
  & \cos \left( x \right)\le \frac{1}{2}{{x}^{2}}+\dot{c}\left( {{x}^{\left( n \right)}} \right)x+\dot{d}\left( {{x}^{\left( n \right)}} \right), \notag \\ 
 & \sin \left( x \right)\le \frac{1}{2}{{x}^{2}}+\ddot{c}\left( {{x}^{\left( n \right)}} \right)x+\ddot{d}\left( {{x}^{\left( n \right)}} \right),  
\end{align}
where $\dot{d}\left( {{x}^{\left( n \right)}} \right)=\cos \left( {{x}^{\left( n \right)}} \right)+{{x}^{\left( n \right)}}\sin \left( {{x}^{\left( n \right)}} \right)-\frac{1}{2}{{\left( {{x}^{\left( n \right)}} \right)}^{2}}$,
$\ddot{d}\left( {{x}^{\left( n \right)}} \right)=\sin \left( {{x}^{\left( n \right)}} \right)-{{x}^{\left( n \right)}}\cos \left( {{x}^{\left( n \right)}} \right)+\frac{1}{2}{{\left( {{x}^{\left( n \right)}} \right)}^{2}}$, 
$\dot{c}\left( {{x}^{\left( n \right)}} \right)=-{{x}^{\left( n \right)}}-\sin \left( {{x}^{\left( n \right)}} \right)$, and 
$\ddot{c}\left( {{x}^{\left( n \right)}} \right)=\cos \left( {{x}^{\left( n \right)}} \right)-{{x}^{\left( n \right)}}$. Afterwards, we substitute (\ref{UB_sin_cos}) into $E_{j,k,t}$ to evaluate the upper-bound of the interference $\mathcal{I}_k$ as 
\begin{equation} \label{FA_UB_comm_interf}
    \mathcal{I}_{k}^{\mathrm{UB}}\left( \mathbf{z} \right)=\frac{1}{2}{{\mathbf{z}}^{T}}{{\mathbf{\hat{R}}}_{k}}\mathbf{z}+\mathbf{\hat{r}}_{k}^{T}\mathbf{z}+{{\hat{c}}_{k}},
\end{equation}
where $\hat{\mathbf R}_{k}
= \bar{\mathbf R}_{k} + \tilde{\mathbf R}_{k} + \bm{\overset{\scriptscriptstyle\smile}{\mathbf R}}_{k} + \bm{\overset{\scriptscriptstyle\frown}{\mathbf R}}_{k}$,   
$\hat{\mathbf r}_{k}^{T} = \bar{\mathbf r}_{k}^{T} + \tilde{\mathbf r}_{k}^{T} + \bm{\overset{\scriptscriptstyle\smile}{\mathbf r}}_{k}^{T}
+ \bm{\overset{\scriptscriptstyle\frown}{\mathbf r}}_{k}^{T}$, and 
${{\hat{c}}_{k}}={{\bar{c}}_{k}}+{{\tilde{c}}_{k}}+{{\overset{\scriptscriptstyle\smile}{c}}_{k}}+{{\overset{\scriptscriptstyle\frown}{c}}_{k}}+\sigma _{k}^{2}$. The relative definitions of the Hessian matrices, linear coefficients, and constant terms are found in Appendix \ref{appendix2}. According to the results of (\ref{FA_comm_useful_LB}) and (\ref{FA_UB_comm_interf}), the communication QoS can be achieved via  
\begin{equation} \label{comm_QoS_FA_final}
    \mathcal{S}_{k}^{\mathrm{LB}}\ge \Gamma _{k}^{\mathrm{th}}\mathcal{I}_{k}^{\mathrm{UB}}.
\end{equation}

Following the results and the analysis of Appendices \ref{appendix1} and \ref{appendix2}, the BComs, sensing, and EH constraints, (\ref{opt1_c2}), (\ref{opt1_c3}), and (\ref{EH_constraint}), can be approximated in similar quadratic expressions. Therefore, the optimization problem of the antenna positions can be recast as   
\begin{subequations}\label{FA_optim_V2}
	\begin{align}
        \textrm{find} \quad & \mathbf{z}, \label{FA_obj_V2} \\
		\textrm{s.t.:} \quad 
        & \mathcal{S}_{k}^{\text{LB}}\ge \Gamma _{k}^{\text{th}}\mathcal{I}_{k}^{\text{UB}},k\in \mathcal{U}, \label{FA_c1_V2} \\ 
        & {{\beta }_{t}}\mathcal{S}_{t}^{\text{LB}}\left( \mathbf{z} \right)\ge \Gamma _{R,t}^{\text{BComs},\text{th}}\mathcal{I}_{t}^{\text{UB}}\left( \mathbf{z} \right),t\in \mathcal{A}, \label{FA_c2_V2} \\ 
        & {{\upsilon }^{2}}\mathcal{S}_{q}^{\text{LB}}\left( \mathbf{z} \right)\ge \Gamma _{R,q}^{\text{sens},\text{th}}\mathcal{I}_{q}^{\text{UB}}\left( \mathbf{z} \right),q\in \mathcal{T},\label{FA_c3_V2} \\
        & \left( 1-{{\beta }_{t}} \right)\mathcal{P}_{t}^{\text{in,LB}}\left( \mathbf{z} \right)\ge {{\Phi }^{-1}}\left( \rho  \right), \nonumber \\
        & 0<{{\beta }_{t}}<1, t\in \mathcal{A}, \label{FA_c4_V2} \\
        & (\ref{opt1_c4}),
	\end{align}
\end{subequations}
which is a convex optimization problem that can be solved using standard convex optimization tools such as the CVX toolbox \cite{grant2011cvx}.


\section{Simulation Results and Discussion}\label{sec4}
\begin{table}[t]
\centering
\scriptsize
\caption{Simulation Parameters}
\label{tab:sim_params}
\setlength{\tabcolsep}{3pt}
\resizebox{\columnwidth}{!}{%
\begin{tabular}{l l l l}
\hline
\textbf{Symbol} & \textbf{Value} & \textbf{Symbol} & \textbf{Value} \\
\hline
$B$                     & $10$ MHz            & $N_f$                                            & $10$ dB \\
$M$                     & 16                  & $N_1 = N_2$                                      & 4 \\
$K$                     & 3                   & $T$                                              & 2 \\
$Q$                     & 2                   & $\upsilon^{2}$                                   & 1 \\
Rician factor (BS-$U_k$)   & 10 dB               & Rician factor (BS-$A_t$, BS-$T_q$, BS-reader)      & 6 dB \\
Rician factor (reader-$A_t$, reader-$T_q$)       & 5 dB       & Rician factor ($U_k$-$A_t$, $A_t$-$T_q$)   & 3 dB  \\
Path-loss exponent (all links)         & 2.2               & Path-loss exponent (BS-reader)        & 2.4  \\
$\Gamma _{k}^{\mathrm{th}} = \Gamma _{R,t}^{\mathrm{BComs},\mathrm{th}} = \Gamma _{R,q}^{\mathrm{sens},\mathrm{th}}$      & 1       & Activation input power of EH ($a,b,c$) & 2.463, 1.635, 0.826   \\
EH linear input, $\rho$                                   & -25 dBm & Path-loss at the reference
distance                 & -30 dB      \\
\hline
\end{tabular}%
} 
\end{table}
The simulation results are presented to evaluate the performance of the proposed FA-aided bistatic ISABC network. The 3GPP urban micro model is adopted to characterize the path-loss values at a carrier frequency of $f_c = 3.5~\mathrm{GHz}$ \cite[Table~B.1.2.1]{access2010further}. The AWGN power $\sigma^2$ is computed as $\sigma^2 = 10\log_{10}(N_0 B N_f)~\mathrm{dBm}$, where $N_0 = -174~\mathrm{dBm/Hz}$ denotes the noise power spectral density, $B$ is the system bandwidth, and $N_f$ is the receiver noise figure. Unless stated otherwise, the main simulation parameters are summarized in Table~\ref{tab:sim_params}, and all results are obtained over $10^3$ Monte Carlo realizations. To emulate smart home/city deployments, the BS and the reader are located at $\{0,0\}$ and $\{12,0\}$, respectively. The communication users, tags, and radar targets  are randomly distributed within a disks centered at $\{55,0\}$, $\{8,-4\}$, $\{8,4\}$ of  radius $5~\mathrm{m}$, $3~\mathrm{m}$, $3~\mathrm{m}$, respectively \cite{galappaththige2022ris}.

For comparative evaluation, we consider the following benchmark schemes: 1) \textit{FPA}, where the BS antennas are deployed at fixed locations; 2) \textit{ZF}, where the BS downlink beamforming for the communication users is constructed based on the BS–user channels to suppress inter-user interference; 3) \textit{Proposed, Com-BCom}, which follows the proposed framework but drops the sensing QoS constraint ($\mathbf{R}_s = \mathbf{0}$); and 4) \textit{Proposed, sensing only}, which retains only the sensing QoS constraint and relaxes all communication/backscatter-related constraints, providing a lower-bound reference on the minimum transmit power when sensing is the main design objective.

\begin{figure}[t]
\centering{\includegraphics[width=0.90\columnwidth]{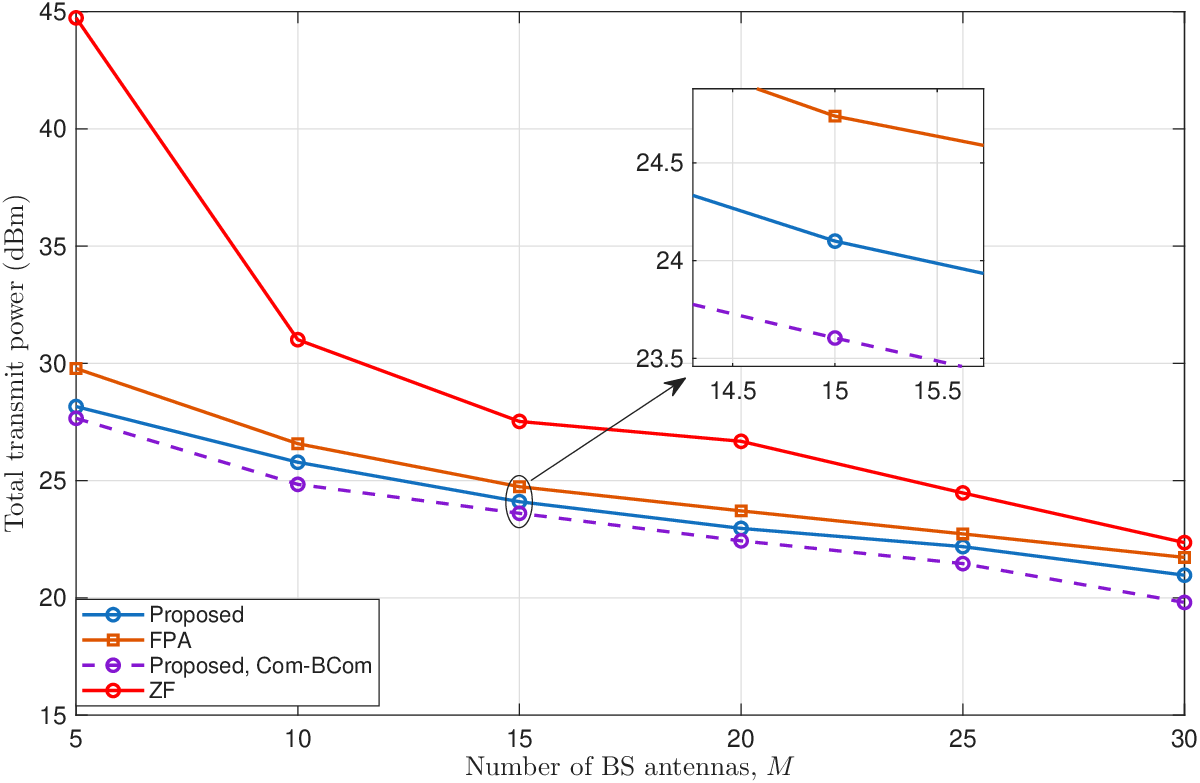}}
\caption{Transmit power versus the number of BS antennas, $M$.}\label{BS_antennas}
\end{figure}
Fig.~\ref{BS_antennas} shows the impact of increasing the number of BS antennas on the total transmit power. It is clear that increasing the number of BS antennas monotonically reduces the total transmit power because a larger $M$ provides higher array gain and more spatial DoFs. Thus, the BS is able to better suppress interference and shape the composite signal observed at the users and the reader. Moreover, the proposed scheme consistently outperforms the FPA baseline, since the FA position primarily induces controllable phase variations that can be exploited to further improve spatial selectivity beyond conventional beamforming. For instance, at $M=15$, the proposed design reduces the required transmit power by 0.64 dB and 3.42 dB compared with FPA and ZF, respectively, corresponding to 13.7\% and 54.5\% savings in linear scale. 
In addition, the performance gap between “Proposed, Com-BCom” and the full “Proposed” design directly reflects the extra transmit power that must be spent to additionally satisfy the sensing QoS in the joint sensing–communication–backscatter formulation.

\begin{figure}[t]
\centering{\includegraphics[width=0.90\columnwidth]{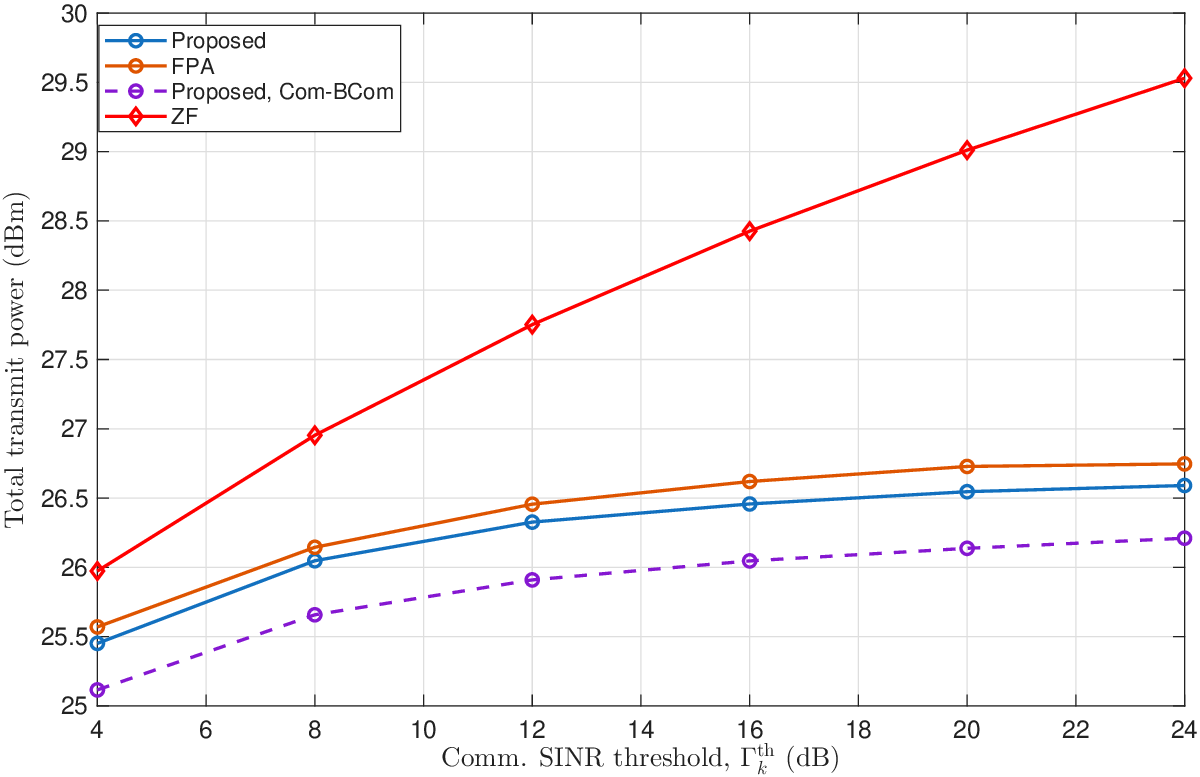}}
\caption{Total transmit power versus $\Gamma_k^{\mathrm{th}}$.}\label{TXpower_commTH}
\end{figure}

Next, Fig.~\ref{TXpower_commTH} investigates the impact of the communication SINR threshold, $\Gamma_k^{\mathrm{th}}$, on the total transmit power. As $\Gamma_k^{\mathrm{th}}$  increases, all the schemes require more transmit power, since the feasible set shrinks and stronger beams must be formed toward the users. The “Proposed, Com-BCom” yields the lowest transmit power because dropping the sensing constraint enlarges the optimization search space. Meanwhile, the proposed scheme consistently outperforms FPA, confirming that exploiting antenna repositioning provides additional spatial DoFs that translate into tangible power savings. The ZF baseline exhibits the highest and steepest power growth with $\Gamma_k^{\mathrm{th}}$, since it enforces interference nulling solely in the BS–user channel subspace, without accounting for the additional coupling introduced by the sensing/backscatter links. Consequently, once the SINR targets become more stringent, the remaining DoFs are insufficient to efficiently satisfy all the constraints simultaneously, and the BS  compensates by increasing the transmit power.

\begin{figure}[t]
\centering{\includegraphics[width=0.90\columnwidth]{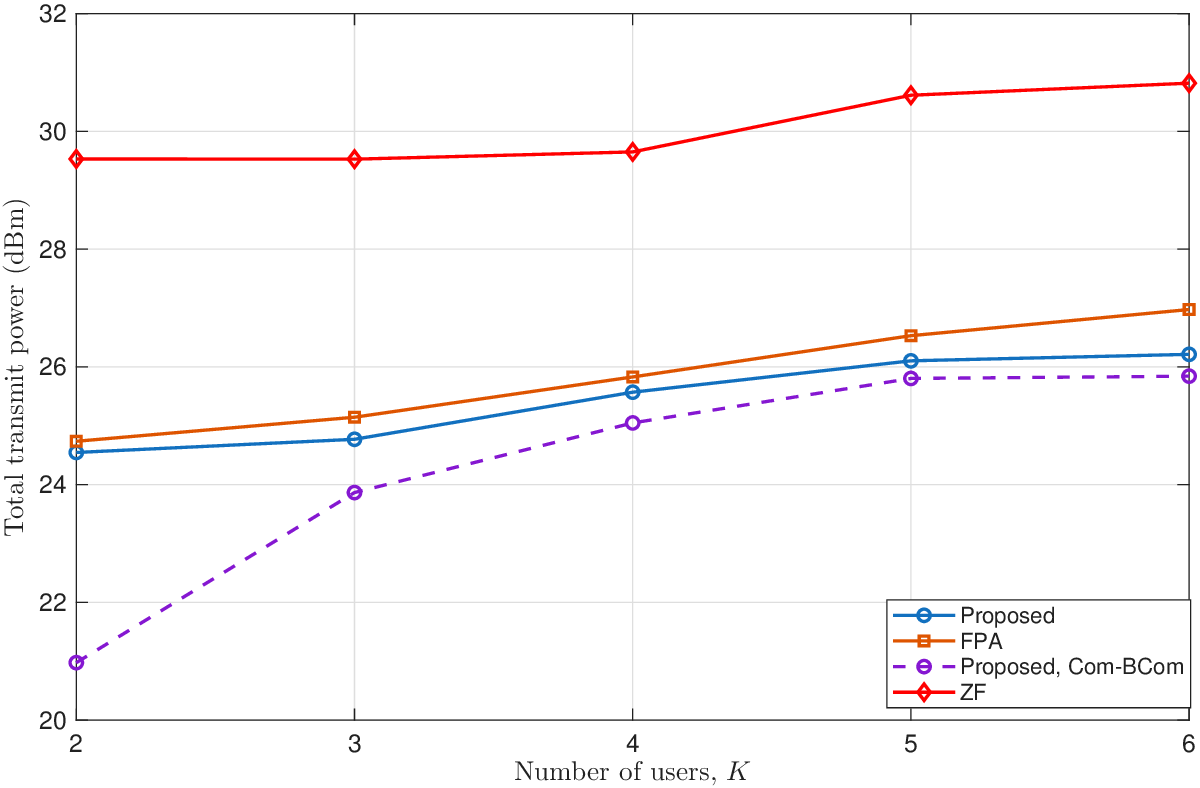}}
\caption{Impact of increasing the number of users on the total transmit power.}\label{Num_users_TXpower}
\end{figure}
The impact of increasing the number of communication users on the total transmit power is illustrated in Fig.~\ref{Num_users_TXpower}, where all schemes show monotonic increment with $K$. This behavior stems from the fact that adding more users introduces additional multi-user interference, thereby shrinking the feasible set of the joint power-minimization problem. The proposed scheme consistently requires less power than the FPA baseline, indicating that FA repositioning provides extra DoFs to better manage the growing interference as $K$ increases. Moreover, the “Proposed, Com-BCom” benchmark yields the lowest power because it relaxes the sensing QoS constraint, allowing more resources to be allocated to satisfying the communication and backscatter requirements. In contrast, the ZF baseline remains significantly more power-hungry as 
$K$ grows, leading to inefficient power usage in multi-user regimes.

\begin{figure}[t]
\centering{\includegraphics[width=0.90\columnwidth]{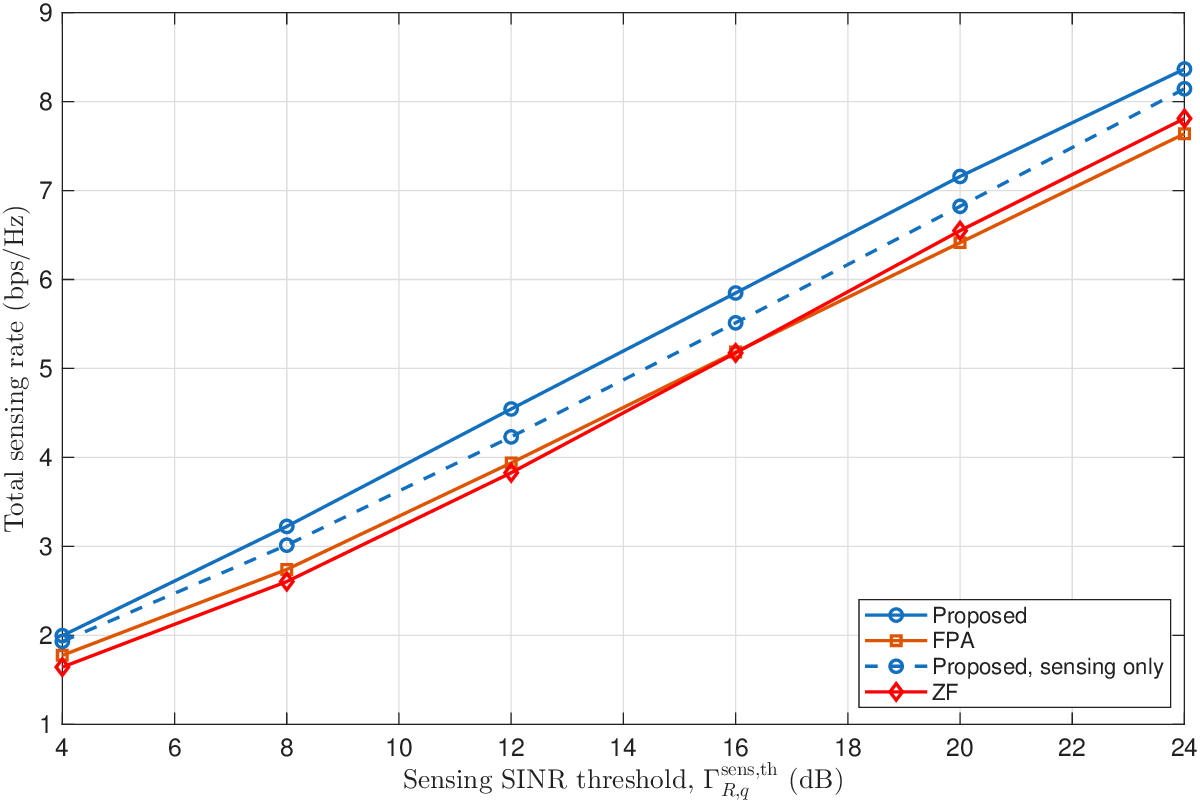}}
\caption{Total sensing rate versus $\Gamma_{R,q}^{\mathrm{sens, th}}$.}\label{sensRate_sensTH}
\end{figure}
Next, Fig.~\ref{sensRate_sensTH} depicts the impact of the sensing-SINR threshold on the sensing rate. Clearly, the sensing rate grows with $\Gamma_{R,q}^{\mathrm{sens, th}}$ for all schemes, but with different relative behaviors for the FPA and the ZF. It is noteworthy that the  ZF design enforces a rigid user-interference-nulling, which is not sensing-aware and can  reduce the energy directed toward the sensing directions. When $\Gamma_{R,q}^{\mathrm{sens, th}} > 16$ dB, the ZF-based design compensates for its less favorable beam pattern by significantly increasing the transmit power, which boosts the received echo power at the reader and causes its sensing rate to eventually surpass that of the FPA, even though it remains less power-efficient overall.

\begin{figure}[t]
\centering{\includegraphics[width=0.90\columnwidth]{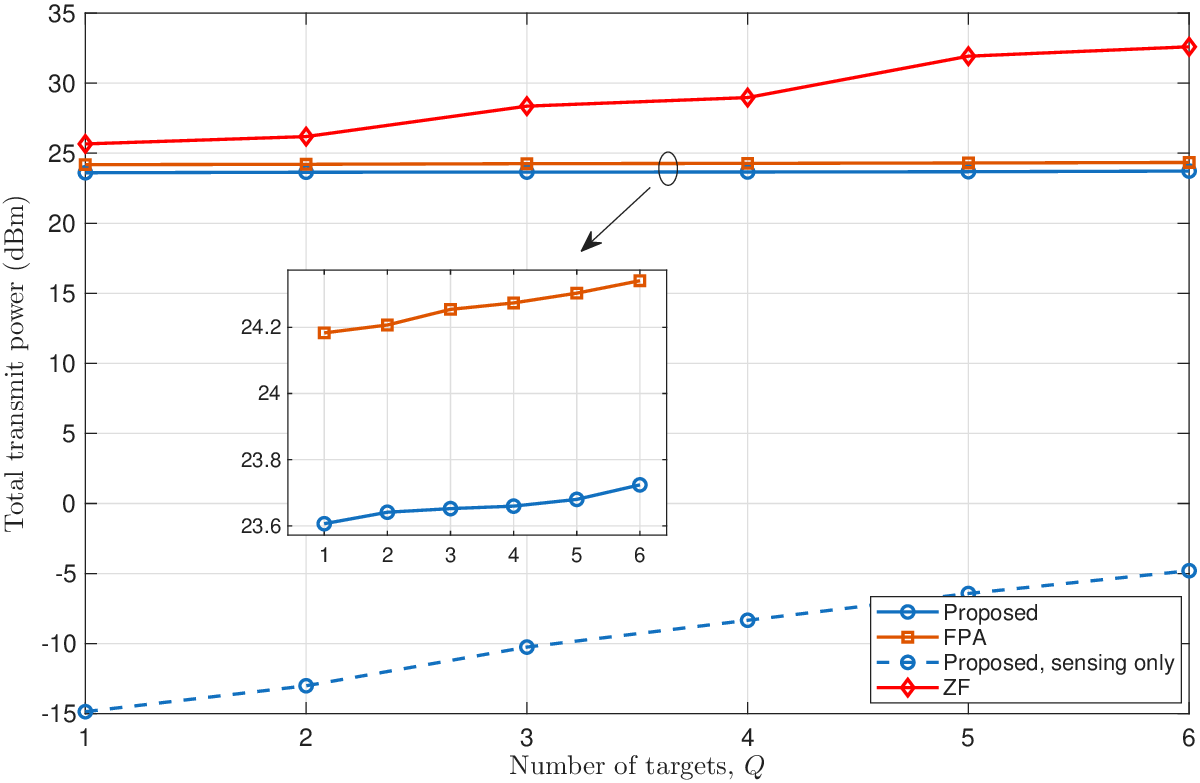}}
\caption{Impact of increasing the number of radar target on the total transmit power.}\label{Num_targets_TXpower}
\end{figure}
Moreover, Fig.~\ref{Num_targets_TXpower} illustrates that increasing the number of sensing targets  
leads to a gradual rise in the total transmit power. This mild increment emphasizes that the constraints (\ref{opt1_c1}-\ref{opt1_c2}) determine the operating power level, and the additional sensing constraints mainly consume the available sensing margin rather than fundamentally changing the optimal power budget. The proposed scheme consistently requires less power than the FPA  baseline, confirming that FA repositioning provides extra spatial degrees of freedom to accommodate the growing set of sensing constraints more efficiently. By contrast, the ZF benchmark becomes increasingly power-hungry as $Q$ grows, since the BS must raise its transmit power to keep the ZF constraints feasible as more targets are imposed.

\begin{figure}[t]
\centering{\includegraphics[width=0.90\columnwidth]{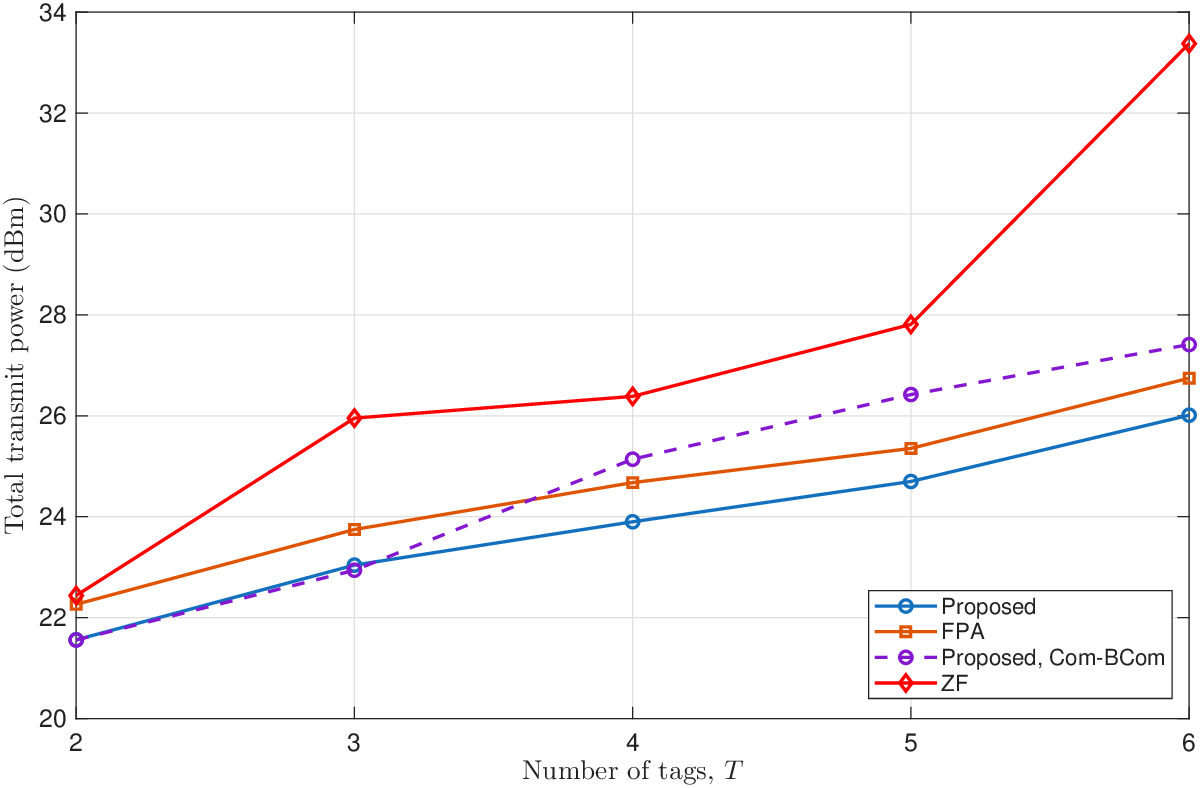}}
\caption{Total transmit power versus the number of tags.}\label{Num_tags_TXpower}
\end{figure}
Fig.~\ref{Num_tags_TXpower} demonstrates that the total transmit power increases with the number of tags. Increasing $T$ introduces additional coupled constraints of the backscatter links, users-tags interference, while the reader must decode BCom and sense targets. This requirement tightens the joint QoS feasibility and exacerbates multi-link interference. Therefore, the BS is forced to radiate more power to simultaneously satisfy these demands. The proposed FA-enabled design consistently achieves the lowest power thanks to extra DoFs that mitigates the growing multi-tag interference. Considering the Proposed and FPA algorithms, we note that, the BS transmits not only information beams $\{ \mathbf{w}_k \}$ but also a dedicated sensing signal with covariance $\mathbf{R}_s$. That sensing component can also contribute to the incident RF power at the tags, hence supporting EH, while being spatially shaped to limit harm to the users/reader, which becomes increasingly valuable when $T$ is large. Conversely, in the “Proposed, Com-BCom” benchmark, when the number of tags is further increased, e.g. $T \ge 4$,  the tags are powered mainly through the information beams. That typically increases the multiuser/backscatter interference and forces a larger transmit power to keep a feasible service for the users/tags.

\begin{figure}[t]
\centering{\includegraphics[width=0.90\columnwidth]{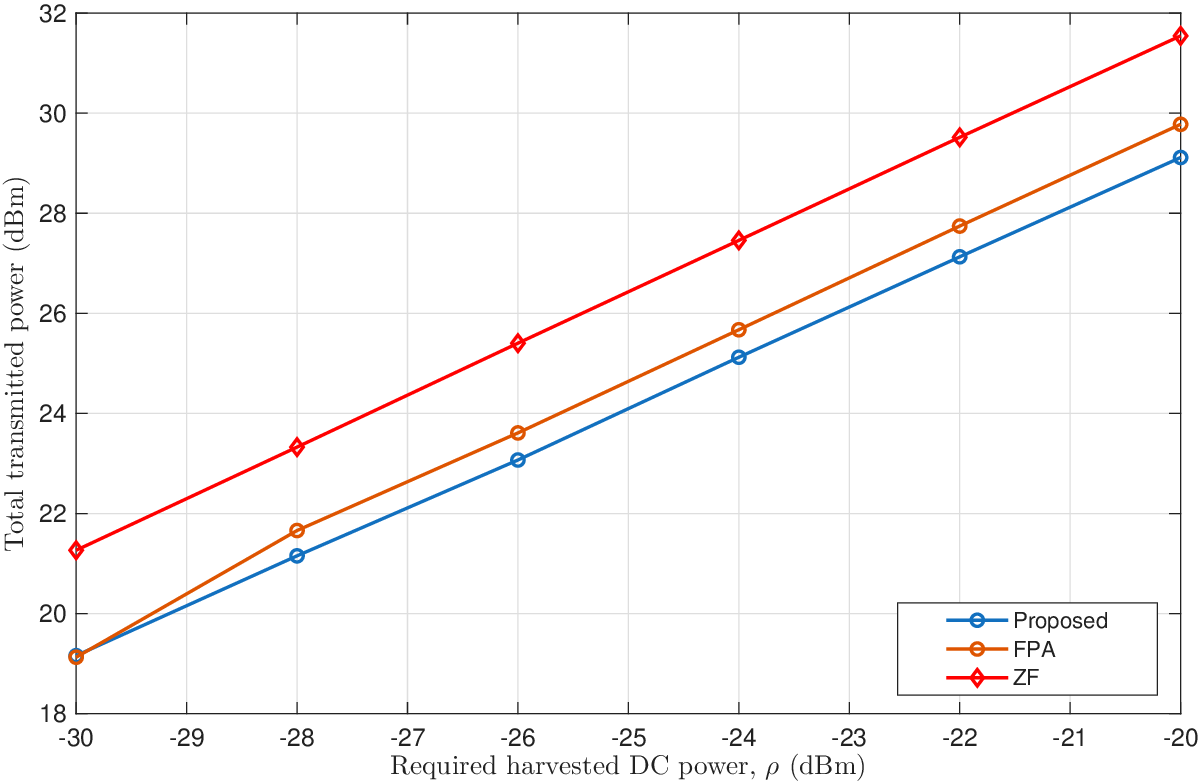}}
\caption{The total transmit power versus $\rho$ .}\label{TXpower_EH_TH}
\end{figure}
Fig.~\ref{TXpower_EH_TH} depicts the impact of $\rho$ on the total transmit power. Increasing the required harvested DC power $\rho$ monotonically increases the minimum total transmit power for all schemes. Since the passive tags rely on the incident RF signal to meet their harvesting requirement, the BS must radiate more power to guarantee sufficient incident energy at the tags. The proposed method consistently requires the lowest transmit power compared with FPA and ZF. For example, at $\rho = -24$ dBm, the proposed scheme reduces the required transmit power by about $11.9\%$ relative to FPA and by $41.7\%$ relative to ZF.

\begin{figure}[t]
\centering{\includegraphics[width=0.90\columnwidth]{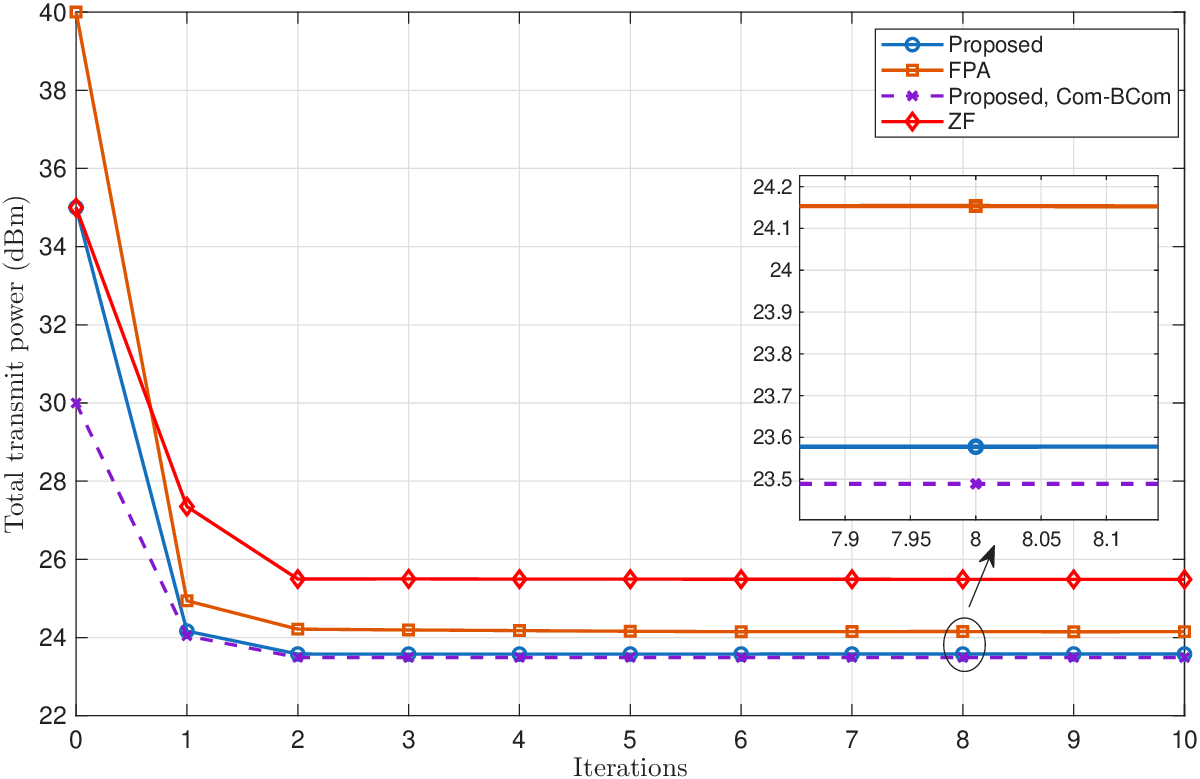}}
\caption{Convergence of the proposed algorithm.}\label{converg}
\end{figure}
Finally, Fig.~\ref{converg} verifies the rapid and stable behavior of the proposed  architecture, where the alternating updates can quickly exploit the additional spatial DoFs offered by FA to reduce the required transmit power. Starting from an initial feasible point, all curves exhibit a steep objective reduction within the first few iterations and then saturate, indicating that the block-coordinate procedure converges to a stationary point of the joint design that minimizes the total transmit power. Importantly, the steady-state gap between the proposed FA-enabled scheme and the FPA baseline is practically meaningful. Specifically, the proposed approach converges to 23.58 dBm, whereas FPA converges to 
24.15 dBm, which represents a gap of 32.3 mW reduction, i.e., about a 12.4\% transmit-power saving relative to the FPA setup. Moreover, compared to ZF, the proposed design achieves 35.6\% transmit-power savings, highlighting that user-centric ZF beamforming cannot efficiently accommodate the coupled sensing and BComs requirements. This directly reflects the benefit of optimizing the FA position in addition to beamforming and covariance design under the multi-service constraints.


\section{Conclusion}\label{sec5}
In this paper, we studied a multi-service FA-enabled bistatic ISABC  architecture and formulated a unified transmit-power minimization problem that jointly optimizes BS precoding, sensing covariance, tag reflection, reader beamforming, and FA positions. We then developed an efficient AO-based block-coordinate algorithm that satisfies heterogeneous QoS requirements for communication, BCom, sensing, and EH constraints. Numerical results show fast convergence and consistent power savings over benchmarks, with gains that persist as the number of antennas, users, tags, and sensing targets increases. 

The results also reveal inherent multi-service trade-offs. In particular, imposing a more stringent communication SINR requirement increases the minimum transmit power while enhancing the achievable rate. In contrast, a higher EH activation threshold demands additional power to energize the tags and may degrade the communication rate because  spatial and power resources are reallocated to satisfy the EH constraint. 
Moreover, comparing the full design with the “Com-BCom” variant quantifies the extra power required to guarantee the sensing QoS. Meanwhile, the relatively weak dependence of the optimum power on the sensing-SINR threshold in several regimes suggests that communication, BCom, and EH constraints often dominate the power budget and that sensing can be supported mainly through waveform/beam-pattern shaping rather than substantial power escalation.
Overall, these findings indicate that FA repositioning provides a practical mechanism to balance energy efficiency, communication throughput, and sensing quality in coupled sensing–communication–backscatter systems.

\appendices
\section{Lower-bounding the intended communication signal (\ref{FA_comm_useful_LB})}\label{appendix1}
The intended communication signal of (\ref{SINR_UE}), $\mathcal{S}_k$, can be re-expressed in terms of FA position $\mathbf{z}$ as 
\begin{equation} \label{FA_comm_useful_V1}
    {{\mathcal{S}}_{k}}\left( {{\mathbf{w}}_{k}},\mathbf{z} \right) = {{\left| {{{\bar{\ell }}}_{k}}{{{\dot{w}}}_{k}} \right|}^{2}}+{{\left| Y \right|}^{2}}+2\Re \left\{ {{X}^{*}}Y \right\},
\end{equation}
 where $\mathcal{S}_k \triangleq {{\left| X+Y \right|}^{2}}={{\left| X \right|}^{2}}+{{\left| Y \right|}^{2}}+2\Re \left\{ {{X}^{*}}Y \right\}$, with $X\triangleq {{\bar{\ell }}_{k}}\mathbf{w}_{k}^{H}{{\mathbf{a}}_{B}}\left( {{\theta }_{k}},\mathbf{z} \right)$ and $Y\triangleq {{\tilde{\ell }}_{k}}\mathbf{w}_{k}^{H}{{\mathbf{\tilde{h}}}_{k}}$. In addition, we have  
 ${{\bar{\ell }}_{k}}=\sqrt{\frac{{{\kappa }_{B,k}}\ell }{1+{{\kappa }_{B,k}}}{{\left( \frac{{{d}_{B,k}}}{{{d}_{0}}} \right)}^{-{{\alpha }_{B,k}}}}}$,   ${{\tilde{\ell }}_{k}}=\sqrt{\frac{\ell }{1+{{\kappa }_{B,k}}}{{\left( \frac{{{d}_{B,k}}}{{{d}_{0}}} \right)}^{-{{\alpha }_{B,k}}}}}$. The expression $\mathbf{w}_{k}^{H}{{\mathbf{a}}_{B}}\left( {{\theta }_{k}},\mathbf{z} \right)$ can be denoted as 
 \begin{equation} \label{FA_expansion_comm1}
      {{\dot{w}}_{k}}\left( {{\theta }_{k}},\mathbf{z} \right) \triangleq \mathbf{w}_{k}^{H}{{\mathbf{a}}_{B}}=\sum\limits_{m=1}^{M}{\left[ {{\mathbf{w}}_{k}} \right]_{m}^{*}{{e}^{j{{{\hat{\theta }}}_{k}}{{z}_{m}}}}},
 \end{equation}
where ${{\hat{\theta }}_{k}}=\frac{-2\pi }{\lambda }\cos {{\theta }_{k}}$. To evaluate the first term of (\ref{FA_comm_useful_V1}), we define 
\begin{equation} \label{FA_expansion_comm2}
  {{\left| {{{\dot{w}}}_{k}} \right|}^{2}} 
  =\sum\limits_{m=1}^{M}{\sum\limits_{{m}'=1}^{M}{{{\varpi }_{k,m,{m}'}}.\cos \left( {{\varphi }_{k,m,{m}'}} \right)}} 
  \triangleq g_{1,k}^{\cos }\left( {{\mathbf{w}}_{k}},\mathbf{z} \right),  
\end{equation}
where ${{\varpi }_{k,m,{m}'}}=\left| \left[ {{\mathbf{w}}_{k}} \right]_{m}^{*}{{\left[ {{\mathbf{w}}_{k}} \right]}_{{{m}'}}} \right|$,  
${{\varphi }_{k,m,{m}'}}={{\hat{\theta }}_{k}}{{z}_{m}}-{{\hat{\theta }}_{k}}{{z}_{{{m}'}}}-\Delta {{\psi }_{k,m,{m}'}}$,  and $\Delta {{\psi }_{k,m,{m}'}}=\angle {{\left[ {{\mathbf{w}}_{k}} \right]}_{m}}-\angle {{\left[ {{\mathbf{w}}_{k}} \right]}_{{{m}'}}}$. Moreover, the cross-term $2\Re \left\{ {{X}^{*}}Y \right\}$, of (\ref{FA_comm_useful_V1}), can be expressed as
\begin{multline} \label{FA_expansion_comm3}
    2\Re \left\{ {{X}^{*}}Y \right\} = 2\Re \left\{ \sum\limits_{m=1}^{M}{{{\left[ {{{\mathbf{\tilde{w}}}}_{k}} \right]}_{m}}}{{e}^{j{{{\hat{\theta }}}_{k}}{{z}_{m}}}} \right\} \\ 
    =2\sum\limits_{m=1}^{M}{\Re \left\{ {{\left[ {{{\mathbf{\tilde{w}}}}_{k}} \right]}_{m}} \right\}\cos \left( {{{\hat{\theta }}}_{k}}{{z}_{m}} \right)+\Im \left\{ {{\left[ {{{\mathbf{\tilde{w}}}}_{k}} \right]}_{m}} \right\}\sin \left( {{{\hat{\theta }}}_{k}}{{z}_{m}} \right)} \\
    \triangleq {{g}_{2,k}}\left( {{\mathbf{w}}_{k}},\mathbf{z} \right),
\end{multline}
where ${{\mathbf{\tilde{w}}}_{k}}={{\bar{\ell }}_{k}}\tilde{\ell }_{k}^{*}\mathbf{\tilde{h}}_{k}^{H}{{\mathbf{W}}_{k}}$. By plugging (\ref{FA_expansion_comm2}) and (\ref{FA_expansion_comm3}), the expression (\ref{FA_comm_useful_V1}) can be recast as 
\begin{equation} \label{FA_comm_useful_V2}
    {{\mathcal{S}}_{k}} = {{\left| {{{\bar{\ell }}}_{k}} \right|}^{2}}g_{1,k}^{\cos } + {{g}_{2,k}} + {{\left| {{{\tilde{\ell }}}_{k}}\mathbf{w}_{k}^{H}{{{\mathbf{\tilde{h}}}}_{k}} \right|}^{2}}.
\end{equation}
However, the expressions (\ref{FA_comm_useful_V2}) is very complex because of the sum of trigonometric functions in the variable $\mathbf{z}$, which leads to non-convexity. To handle this issue, we employ MM method, where trigonometric functions are approximated via second-order Taylor expansion, resulting in a quadratic surrogate functions as in the following. 
\begin{align} \label{LB_sin_cos}
    \cos \left( x \right) & \ge -\frac{1}{2}{{x}^{2}}+\dot{a}\left( {{x}^{\left( n \right)}} \right)x+\dot{b}\left( {{x}^{\left( n \right)}} \right),  \notag \\ 
    \sin \left( x \right) & \ge -\frac{1}{2}{{x}^{2}}+\ddot{a}\left( {{x}^{\left( n \right)}} \right)x+\ddot{b}\left( {{x}^{\left( n \right)}} \right),  
\end{align}
where the superscript $\left( n \right)$ denotes the index of $n$-th iteration. Moreover, we have $\dot{b}\left( {{x}^{\left( n \right)}} \right)=\cos \left( {{x}^{\left( n \right)}} \right)+{{x}^{\left( n \right)}}\sin \left( {{x}^{\left( n \right)}} \right)-\frac{1}{2}{{\left( {{x}^{\left( n \right)}} \right)}^{2}}$, 
$\ddot{b}\left( {{x}^{\left( n \right)}} \right)=\sin \left( {{x}^{\left( n \right)}} \right)-{{x}^{\left( n \right)}}\cos \left( {{x}^{\left( n \right)}} \right)-\frac{1}{2}{{\left( {{x}^{\left( n \right)}} \right)}^{2}}$, 
$\dot{a}\left( {{x}^{\left( n \right)}} \right)={{x}^{\left( n \right)}}-\sin \left( {{x}^{\left( n \right)}} \right)$, and 
$\ddot{a}\left( {{x}^{\left( n \right)}} \right)={{x}^{\left( n \right)}}+\cos \left( {{x}^{\left( n \right)}} \right)$. 
Accordingly, the first term of (\ref{FA_comm_useful_V2}) is bounded as  
\begin{align}
g_{1,k}^{\cos}
&\approx \sum_{m=1}^{M}\sum_{m'=1}^{M} \varpi_{k,m,m'}\Big[
\underbrace{-\tfrac{1}{2}\hat{\theta}_{k}^{2} z_{m}^{2}
-\tfrac{1}{2}\hat{\theta}_{k}^{2} z_{m'}^{2}
+\hat{\theta}_{k}^{2} z_{m} z_{m'}}_{\text{quadratic term}}
\notag\\[-1pt]&\quad
+\,\underbrace{\bar{\varepsilon}_{2,k,m,m'}(z_{m}-z_{m'})}_{\text{linear term}}
+ \underbrace{\varepsilon_{2,k,m,m'}}_{\mathclap{\text{constant}}}\Big], \label{g1kcos_LB}
\end{align}
where ${{\varepsilon }_{2,k,m,{m}'}}=\dot{b}\left( \varphi _{k,m,{m}'}^{\left( n \right)} \right)-\dot{a}\left( \varphi _{k,m,{m}'}^{\left( n \right)} \right)\Delta {{\psi }_{k,m,{m}'}}-\frac{1}{2}\Delta \psi _{k,m,{m}'}^{2}$ and ${{\bar{\varepsilon }}_{2,k,m,{m}'}}=\left( \dot{a}\left( \varphi _{k,m,{m}'}^{\left( n \right)} \right)+\Delta {{\psi }_{k,m,{m}'}} \right){{\hat{\theta }}_{k}}$.  
The first term in (\ref{g1kcos_LB}) can be re-expressed as $\frac{1}{2}{{\mathbf{z}}^{T}}\mathbf{R}_{g1}^{\cos }\mathbf{z}$ by finding the Hessian matrix  $\mathbf{R}_{g1}^{\cos }$ as   
\begin{equation} \label{g1kcos_LB_Hessian}
    \mathbf{R}_{g1}^{\cos }=2\hat{\theta }_{k}^{2}\left| {{\pmb{\omega }}_{k}} \right|{{\left| {{\pmb{\omega }}_{k}} \right|}^{T}}-2\hat{\theta }_{k}^{2}\left( \sum\limits_{m}^{M}{\left| {{\left[ {{\mathbf{w}}_{k}} \right]}_{m}} \right|} \right)\text{diag}\left( \left| {{\pmb{\omega }}_{k}} \right| \right),
\end{equation}
where $\left| {{\pmb{\omega }}_{k}} \right|={{\left[ \left| {{\left[ {{\mathbf{w}}_{k}} \right]}_{1}} \right|,...,\left| {{\left[ {{\mathbf{w}}_{k}} \right]}_{M}} \right| \right]}^{T}}$.
The linear part of (\ref{g1kcos_LB}) can be re-written as ${{\mathbf{r}}^{T}}\mathbf{z}=\sum\limits_{p=1}^{M}{{{r}_{p}}{{z}_{p}}}$, where the $p-$th coefficient of vector $\mathbf{r}$ is given by  
\begin{multline}\label{g1kcos_LB_linear}
    \bigl[\mathbf{r}_{g1}^{\cos}\bigr]_p
    = \hat{\theta}_{k}\sum_{m'=1}^{M} \varpi_{k,p,m'}\!\left(\dot{a}\!\left(\varphi_{k,p,m'}^{(n)}\right)+\Delta \psi_{k,p,m'}\right) \\
    \hspace{3.2em}
    - \hat{\theta}_{k}\sum_{m'=1}^{M} \varpi_{k,m',p}\!\left(\dot{a}\!\left(\varphi_{k,m',p}^{(n)}\right)+\Delta \psi_{k,m',p}\right), 
\end{multline}
where ${{\varpi }_{k,p,{m}'}}=\left| \left[ {{\mathbf{w}}_{k}} \right]_{p}^{*}{{\left[ {{\mathbf{w}}_{k}} \right]}_{{{m}'}}} \right|$ and ${{\varpi }_{k,{m}',p}}=\left| \left[ {{\mathbf{w}}_{k}} \right]_{{{m}'}}^{*}{{\left[ {{\mathbf{w}}_{k}} \right]}_{p}} \right|$.

Similarly, the second term of (\ref{FA_comm_useful_V2}) can be  bounded as 
\small
\begin{align}
g_{2,k}
&= -\sum_{m=1}^{M}
   \underbrace{\Big(\Re\{[\tilde{\mathbf{w}}_{k}]_{m}\}
   + \Im\{[\tilde{\mathbf{w}}_{k}]_{m}\}\Big)(\hat{\theta}_{k} z_{m})^{2}}_{\mathclap{\text{quadratic term}}}
\notag\\
& + 2\sum_{m=1}^{M}
   \underbrace{\Big(\Re\{[\tilde{\mathbf{w}}_{k}]_{m}\}\,\dot{a}(\hat{\theta}_{k} z_{m}^{(n)})
   + \Im\{[\tilde{\mathbf{w}}_{k}]_{m}\}\,\ddot{a}(\hat{\theta}_{k} z_{m}^{(n)})\Big)\hat{\theta}_{k} z_{m}}_{\mathclap{\text{linear term}}}
\notag\\
& + 2\sum_{m=1}^{M}
   \underbrace{\Re\{[\tilde{\mathbf{w}}_{k}]_{m}\}\,\dot{b}(\hat{\theta}_{k} z_{m}^{(n)})
   + \Im\{[\tilde{\mathbf{w}}_{k}]_{m}\}\,\ddot{b}(\hat{\theta}_{k} z_{m}^{(n)})}_{\mathclap{\text{constant}}}, \label{g2k_LB}
\end{align}
\normalsize
where Hessian matrix of the quadratic term is denoted as
\begin{equation} \label{g2k_LB_Hessian}
    {{\mathbf{R}}_{g2}}=-2\hat{\theta }_{k}^{2}\text{diag}\left( \Re \left\{ {{\left[ {{{\mathbf{\tilde{w}}}}_{k}} \right]}_{m}} \right\}+\Im \left\{ {{\left[ {{{\mathbf{\tilde{w}}}}_{k}} \right]}_{m}} \right\} \right)_{m=1}^{M}.
\end{equation}
The linear part of (\ref{g2k_LB}) can be re-written as
\small
\begin{equation} \label{g2k_LB_linear}
    {{\left[ {{\mathbf{r}}_{g2}} \right]}_{p}}=2{{\hat{\theta }}_{k}}\left( \Re \left\{ {{\left[ {{{\mathbf{\tilde{w}}}}_{k}} \right]}_{p}} \right\}\dot{a}\left( {{{\hat{\theta }}}_{k}}z_{p}^{\left( n \right)} \right)+\Im \left\{ {{\left[ {{{\mathbf{\tilde{w}}}}_{k}} \right]}_{p}} \right\}\ddot{a}\left( {{{\hat{\theta }}}_{k}}z_{p}^{\left( n \right)} \right) \right).
\end{equation}
\normalsize

Based on (\ref{g1kcos_LB_Hessian}), (\ref{g2k_LB_Hessian}), (\ref{g1kcos_LB_linear}), and (\ref{g2k_LB_linear}), the Hessian and linear coefficients can be combined as 
\begin{align} 
    {{\mathbf{R}}_{k}}=&{{\left| {{{\bar{\ell }}}_{k}} \right|}^{2}}\mathbf{R}_{g1}^{\cos }+{{\mathbf{R}}_{g2}}, \label{final_Sk_Hessian} \\ 
    {{\mathbf{r}}_{k}}=&{{\left| {{{\bar{\ell }}}_{k}} \right|}^{2}}\mathbf{r}_{g1}^{\cos }+{{\mathbf{r}}_{g2}}, \label{final_Sk_linear}  
\end{align}
respectively. Therefore, the quadratic expression (\ref{FA_comm_useful_V1}) can be lower-bounded as in (\ref{FA_comm_useful_LB}), where the corresponding constant term $c_k$ is written as (\ref{final_Sk_const}), at the top of next page.

\begin{floatEq}
	\begin{subequations}\label{const_terms_interf_comm}\begin{align} 
		c_k = & \left|\bar{\ell}_k\right|^{2}
        \sum_{m=1}^{M}\sum_{m'=1}^{M} {{\varpi }_{k,m,{m}'}}\left( \dot{b}\left( \varphi _{k,m,{m}'}^{\left( n \right)} \right)-\dot{a}\left( \varphi _{k,m,{m}'}^{\left( n \right)} \right)\Delta {{\psi }_{k,m,{m}'}}-\frac{1}{2}\Delta \psi _{k,m,{m}'}^{2} \right)   \notag\\
        & + 2 \sum_{m=1}^{M}
        \Re \left\{ {{\left[ {{{\mathbf{\tilde{w}}}}_{k}} \right]}_{m}} \right\}\dot{b}\left( {{{\hat{\theta }}}_{k}}z_{m}^{\left( n \right)} \right)+\Im \left\{ {{\left[ {{{\mathbf{\tilde{w}}}}_{k}} \right]}_{m}} \right\}\ddot{b}\left( {{{\hat{\theta }}}_{k}}z_{m}^{\left( n \right)} \right) +{{\left| {{{\tilde{\ell }}}_{k}}\mathbf{w}_{k}^{H}{{{\mathbf{\tilde{h}}}}_{k}} \right|}^{2}}, \label{final_Sk_const} \\       
		  {{c}_{k,i}} = & {{\left| {{{\bar{\ell }}}_{k}} \right|}^{2}}\sum\limits_{m}^{M}{\sum\limits_{{{m}'}}^{M}{{{\varpi }_{i,m,{m}'}}\left( \dot{d}\left( \varphi _{k,i,m,{m}'}^{\left( n \right)} \right)-\dot{c}\left( \varphi _{k,i,m,{m}'}^{\left( n \right)} \right)\Delta {{\psi }_{i,m,{m}'}}+\frac{1}{2}\Delta \psi _{i,m,{m}'}^{2} \right)}} \notag \\ 
        & +2\sum\limits_{m=1}^{M}{\Re \left\{ {{\left[ {{{\mathbf{\tilde{w}}}}_{k,i}} \right]}_{m}} \right\}\dot{d}\left( {{{\hat{\theta }}}_{k}}z_{m}^{\left( n \right)} \right)+\Im \left\{ {{\left[ {{{\mathbf{\tilde{w}}}}_{k,i}} \right]}_{m}} \right\}\ddot{d}\left( {{{\hat{\theta }}}_{k}}z_{m}^{\left( n \right)} \right)}+{{\left| {{{\tilde{\ell }}}_{k}}\mathbf{w}_{i}^{H}{{{\mathbf{\tilde{h}}}}_{k}} \right|}^{2}},  \label{const_comm_interf_term1_} \\
        {{c}_{j,k,t}} = & {{\left| {{{\bar{\omega }}}_{k,t}} \right|}^{2}}\sum\limits_{m}^{M}{\sum\limits_{{{m}'}}^{M}{{{\varpi }_{j,m,{m}'}}\left( \dot{d}\left( \varphi _{j,t,m,{m}'}^{\left( n \right)} \right)-\dot{c}\left( \varphi _{j,t,m,{m}'}^{\left( n \right)} \right)\Delta {{\psi }_{j,m,{m}'}}+\frac{1}{2}\Delta \psi _{j,m,{m}'}^{2} \right)}} \notag \\ 
        & +2\sum\limits_{m=1}^{M}{\Re \left\{ {{\left[ {{{\mathbf{\tilde{w}}}}_{j,k,t}} \right]}_{m}} \right\}\dot{d}\left( {{{\hat{\theta }}}_{t}}z_{m}^{\left( n \right)} \right)+\Im \left\{ {{\left[ {{{\mathbf{\tilde{w}}}}_{j,k,t}} \right]}_{m}} \right\}\ddot{d}\left( {{{\hat{\theta }}}_{t}}z_{m}^{\left( n \right)} \right)}+{{\left| {{{\tilde{\omega }}}_{k,t}}\mathbf{w}_{j}^{H}{{{\mathbf{\tilde{h}}}}_{t}} \right|}^{2}}. \label{const_comm_interf_term4_} 
	\end{align}\end{subequations}
\end{floatEq}

\section{Communication interference terms (\ref{FA_UB_comm_interf})} \label{appendix2}
To compact the expressions of the interference $\mathcal{I}_k^{\mathrm{UB}}(\mathbf{z})$, in  (\ref{FA_UB_comm_interf}), we define the diagonal operator
\begin{equation}\label{def_Dop}
\mathcal{D}(\mathbf{x}) \triangleq \mathrm{diag}\!\Big(\Re\{[\mathbf{x}]_{m}\}+\Im\{[\mathbf{x}]_{m}\}\Big)_{m=1}^{M}.
\end{equation}
By employing the analysis of Appendix \ref{appendix1} and upper-bounds of (\ref{UB_sin_cos}) on trigonometric functions, the Hessian-related interference matrices can be denoted as
\begin{align*}
\mathbf{\bar R}_{k}
&=2\hat{\theta}_{k}^{2}\!\sum_{\substack{i=1, i\neq k}}^{K}\!|\bar{\ell}_{k}|^{2}\mathbf{\Omega}_{i}
+\mathcal{D}\!\big(\mathbf{\tilde w}_{k,i}\big), \\
\mathbf{\tilde R}_{k}
&=2\hat{\theta}_{k}^{2}\!\sum_{r=1}^{M}\lambda_{r}\!\left(|\bar{\ell}_{k}|^{2}\mathbf{\Lambda}_{r}
+\mathcal{D}\!\big(\mathbf{\tilde e}_{k,r}\big)\right), \\
\bm{\overset{\scriptscriptstyle\smile}{\mathbf R}}_{k}
&=2\!\sum_{t=1}^{T}\!\beta_{t}\hat{\theta}_{t}^{2}\!\sum_{r=1}^{M}\lambda_{r}\!\left(|\bar{\omega}_{k,t}|^{2}\mathbf{\Lambda}_{r}
+\mathcal{D}\!\big(\mathbf{\tilde e}_{k,t,r}\big)\right), \\
\bm{\overset{\scriptscriptstyle\frown}{\mathbf R}}_{k}
&=2\!\sum_{t=1}^{T}\!\beta_{t}\hat{\theta}_{t}^{2}\!\sum_{j=1}^{K}\!\left(|\bar{\omega}_{k,t}|^{2}\mathbf{\Omega}_{j}
+\mathcal{D}\!\big(\mathbf{\tilde w}_{j,k,t}\big)\right). 
\end{align*}
The corresponding linear terms can be expressed written as
\begin{align*}
\mathbf{\bar r}_{k}
&=\sum_{\substack{i=1, i\neq k}}^{K}\!\Big(|\bar{\ell}_{k}|^{2}\mathbf{r}_{g_{1,k}}^{\cos}(\mathbf{w}_{i})
+\mathbf{r}_{g_{2,k}}(\mathbf{w}_{i})\Big), \\
\mathbf{\tilde r}_{k}
&=\sum_{r=1}^{M}\lambda_{r}\Big(|\bar{\ell}_{k}|^{2}\mathbf{r}_{g_{1,k,r}}^{\cos}(\mathbf{e}_{r})
+\mathbf{r}_{g_{2,k,r}}(\mathbf{e}_{r})\Big), \\
\bm{\overset{\scriptscriptstyle\smile}{\mathbf r}}_{k}
&=\sum_{t=1}^{T}\beta_{t}\sum_{r=1}^{M}\lambda_{r}\Big(|\bar{\omega}_{k,t}|^{2}\mathbf{r}_{g_{1,t,r}}^{\cos}(\mathbf{e}_{r})
+\mathbf{r}_{g_{2,k,t,r}}(\mathbf{e}_{r})\Big), \\ 
\bm{\overset{\scriptscriptstyle\frown}{\mathbf r}}_{k}
&=\sum_{t=1}^{T}\beta_{t}\sum_{j=1}^{K}\Big(|\bar{\omega}_{k,t}|^{2}\mathbf{r}_{g_{1,j,t}}^{\cos}(\mathbf{w}_{j})
+\mathbf{r}_{g_{2,j,k,t}}(\mathbf{w}_{j})\Big). 
\end{align*}
In addition, the constant terms are given by
\begin{align*}
    {{\bar{c}}_{k}}=&\sum\limits_{i=1,i\ne k}^{K}{{{c}_{k,i}}}, \quad 
    {{\tilde{c}}_{k}}=\sum\limits_{r=1}^{M}{{{\lambda }_{r}}{{c}_{k,r}}}, \notag \\
    {{\overset{\scriptscriptstyle\smile}{c}}_{k}}=&\sum\limits_{t=1}^{T}{{{\beta }_{t}}\sum\limits_{r=1}^{M}{{{\lambda }_{r}}{{c}_{k,t,r}}}}, \quad {{\overset{\scriptscriptstyle\frown}{c}}_{k}}=\sum\limits_{t=1}^{T}{{{\beta }_{t}}\sum\limits_{j=1}^{K}{{{c}_{j,k,t}}}},
\end{align*}
where $c_{k,i}$ and $c_{j,k,t}$ are defined in (\ref{const_comm_interf_term1_}) and  (\ref{const_comm_interf_term4_}),  respectively, at the top of next page. 
Note that $c_{k,r}$ can be obtained from $c_{k,i}$ by replacing
$\varpi_{i,m,m'}$ with $\nu_{r,m,m'}$, 
$\varphi_{k,i,m,m'}^{(n)}$ with $\varphi_{r,k,m,m'}^{(n)}$,
$\Delta\psi_{i,m,m'}$ with $\Delta\psi_{r,m,m'}$,
$\tilde{\mathbf{w}}_{k,i}$ with $\tilde{\mathbf{e}}_{k,r}$,
and $\mathbf{w}_{i}$ with $\mathbf{e}_{r}$. Similarly, $c_{k,t,r}$ can be obtained from $c_{j,k,t}$ by replacing
$\varpi_{j,m,m'}$ with $\nu_{r,m,m'}$,
$\varphi_{j,t,m,m'}^{(n)}$ with $\varphi_{r,t,m,m'}^{(n)}$,
$\Delta\psi_{j,m,m'}$ with $\Delta\psi_{r,m,m'}$,
$\tilde{\mathbf{w}}_{j,k,t}$ with $\tilde{\mathbf{e}}_{k,t,r}$,
and $\mathbf{w}_{j}$ with $\mathbf{e}_{r}$.
Moreover, we have  
${{\mathbf{\Omega }}_{i}}=\left( \sum\limits_{m}^{M}{\left| {{\left[ {{\mathbf{w}}_{i}} \right]}_{m}} \right|} \right)\text{diag}\left( \left| {{\pmb{\omega }}_{i}} \right| \right)-\left| {{\pmb{\omega }}_{i}} \right|{{\left| {{\pmb{\omega }}_{i}} \right|}^{T}}$,
${{\mathbf{\Lambda }}_{r}}=\left( \sum\limits_{m}^{M}{\left| {{\left[ {{\mathbf{e}}_{r}} \right]}_{m}} \right|} \right)\text{diag}\left( \left| {{\pmb{\nu }}_{r}} \right| \right)-\left| {{\pmb{\nu }}_{r}} \right|{{\left| {{\pmb{\nu }}_{r}} \right|}^{T}}$,  
${{\mathbf{\tilde{w}}}_{k,i}}={{\bar{\ell }}_{k}}\tilde{\ell }_{k}^{*}\mathbf{\tilde{h}}_{k}^{H}{{\mathbf{W}}_{i}}$,  
${{\varphi }_{k,i,m,{m}'}}={{\hat{\theta }}_{k}}{{z}_{m}}-{{\hat{\theta }}_{k}}{{z}_{{{m}'}}}-\Delta {{\psi }_{i,m,{m}'}}$,  
$\left| {{\pmb{\nu }}_{r}} \right|={{\left[ \left| {{\left[ {{\mathbf{e}}_{r}} \right]}_{1}} \right|,...,\left| {{\left[ {{\mathbf{e}}_{r}} \right]}_{M}} \right| \right]}^{T}}$, and 
\small
\begin{align*} 
    {{\left[ \mathbf{r}_{{{g}_{1,k}}}^{\cos } \right]}_{p}}=&{{{\hat{\theta }}}_{k}}\sum\limits_{{{m}'}}^{M}{{{\varpi }_{i,p,{m}'}}\left( \dot{c}\left( \varphi _{k,i,p,{m}'}^{\left( n \right)} \right)-\Delta {{\psi }_{i,p,{m}'}} \right)} \\ 
    & -{{{\hat{\theta }}}_{k}}\sum\limits_{{{m}'}}^{M}{{{\varpi }_{i,{m}',p}}\left( \dot{c}\left( \varphi _{k,i,{m}',p}^{\left( n \right)} \right)-\Delta {{\psi }_{i,{m}',p}} \right)},\\
    {{\left[ {{\mathbf{r}}_{{{g}_{2,k}}}} \right]}_{p}}=&2{{\hat{\theta }}_{k}}\left( \Re \left\{ {{\left[ {{{\mathbf{\tilde{w}}}}_{k,i}} \right]}_{p}} \right\}\dot{c}\left( {{{\hat{\theta }}}_{k}}z_{p}^{\left( n \right)} \right)+\Im \left\{ {{\left[ {{{\mathbf{\tilde{w}}}}_{k,i}} \right]}_{p}} \right\}\ddot{c}\left( {{{\hat{\theta }}}_{k}}z_{p}^{\left( n \right)} \right) \right), \\
    {{\left[ \mathbf{r}_{{{g}_{1,k,r}}}^{\cos } \right]}_{p}}=&{{{\hat{\theta }}}_{k}}\sum\limits_{{{m}'}}^{M}{{{\nu }_{r,p,{m}'}}\left( \dot{c}\left( \varphi _{r,k,p,{m}'}^{\left( n \right)} \right)-\Delta {{\psi }_{r,p,{m}'}} \right)} \\ 
    &  -{{{\hat{\theta }}}_{k}}\sum\limits_{{{m}'}}^{M}{{{\nu }_{r,{m}',p}}\left( \dot{c}\left( \varphi _{r,k,{m}',p}^{\left( n \right)} \right)-\Delta {{\psi }_{r,{m}',p}} \right)}, \\
    {{\left[ {{\mathbf{r}}_{{{g}_{2,k,r}}}} \right]}_{p}}=&2{{\hat{\theta }}_{k}}\left( \Re \left\{ {{\left[ {{{\mathbf{\tilde{e}}}}_{k,r}} \right]}_{p}} \right\}\dot{c}\left( {{{\hat{\theta }}}_{k}}z_{p}^{\left( n \right)} \right)+\Im \left\{ {{\left[ {{{\mathbf{\tilde{e}}}}_{k,r}} \right]}_{p}} \right\}\ddot{c}\left( {{{\hat{\theta }}}_{k}}z_{p}^{\left( n \right)} \right) \right),
\end{align*}
\normalsize
where the entries of $\mathbf{r}_{g_{1,t,r}}^{\cos}$ can be obtained from those of 
$\mathbf{r}_{g_{1,k,r}}^{\cos}$ by replacing $\hat{\theta}_{k}$ with $\hat{\theta}_{t}$ and
$\varphi_{r,k,p,m'}^{(n)}$ and $\varphi_{r,k,m',p}^{(n)}$ with 
$\varphi_{r,t,p,m'}^{(n)}$ and $\varphi_{r,t,m',p}^{(n)}$, respectively.
Furthermore, $\mathbf{r}_{g_{1,j,t}}^{\cos}$ follows from $\mathbf{r}_{g_{1,t,r}}^{\cos}$ by replacing
$\nu_{r,p,m'}$ and $\nu_{r,m',p}$ with $\varpi_{j,p,m'}$ and $\varpi_{j,m',p}$, and
$\Delta\psi_{r,p,m'}$ and $\Delta\psi_{r,m',p}$ with 
$\Delta\psi_{j,p,m'}$ and $\Delta\psi_{j,m',p}$, respectively. Similarly, the entries of $\mathbf{r}_{g_{2,k,t,r}}$ can be obtained from those of 
$\mathbf{r}_{g_{2,k,r}}$ by replacing $\hat{\theta}_{k}$ with $\hat{\theta}_{t}$ and
$\tilde{\mathbf{e}}_{k,r}$ with $\tilde{\mathbf{e}}_{k,t,r}$.
In turn, $\mathbf{r}_{g_{2,j,k,t}}$ follows from $\mathbf{r}_{g_{2,k,t,r}}$ by replacing
$\tilde{\mathbf{e}}_{k,t,r}$ with $\tilde{\mathbf{w}}_{j,k,t}$.

\bibliographystyle{ieeetr}
\bibliography{Refs}

\end{document}